\DeclareMathOperator{\cond}{cond}
\renewcommand{\dot}[1]{\overset{\,_{\mbox{\Large .}}}{#1}}
\title{Fast evaluation and root finding for polynomials with floating-point coefficients}
\author{Rémi Imbach}
\affiliation{
  \institution{Université de Lorraine, CNRS, Inria, LORIA}
  \city{F-54000 Nancy}
  \country{France}
}
\author{Guillaume Moroz}
\email{guillaume.moroz@inria.fr}
\affiliation{
  \institution{Université de Lorraine, CNRS, Inria, LORIA}
  \city{F-54000 Nancy}
  \country{France}
}
\begin{abstract}

  Evaluating or finding the roots of a polynomial $f(z) = f_0 + \cdots +
  f_d z^d$ with floating-point number coefficients is a ubiquitous
  problem. By using a piecewise approximation of $f$ obtained with a
  careful use of the Newton polygon of $f$, we improve state-of-the-art
  upper bounds on the number of operations to evaluate and find the
  roots of a polynomial. In particular, if the coefficients of $f$ are
  given with $m$ significant bits, we provide for the first time an
  algorithm that finds all the roots of $f$ with a relative condition
  number lower than $2^m$, using a number of bit operations quasi-linear
  in the bit-size of the floating-point representation of $f$. Notably,
  our new approach handles efficiently polynomials with coefficients
  ranging from $2^{-d}$ to $2^d$, both in theory and in practice.

\end{abstract}
\begin{document}
\maketitle
 
Evaluating or finding the roots of a polynomial 
with
coefficients represented as floating-point numbers
is widely used.
Given two positive integer
bounds $m$ and $\tau$, a floating number can be represented as $a2^b$ where $a$ and
$b$ are integers with $|a| \leq 2^m$ and $|b|\leq \tau$. This representation notably allows
to use a constant size $m+\log_2 \tau$ to represent numbers with
different orders of magnitude \cite{higham2002accuracy,muller2018handbook,brent2010modern}.

In the literature, analysing the complexity for approximating the roots
of a polynomial is usually done by considering a fixed-point
representation for the
coefficients (\cite{pan2002univariate,MSWjsc15,becker2018near,moroz2022new} and references
therein). In those analyses, the coefficients are
either integers or represented as fixed-point numbers with a uniform
error on all the coefficients. The drawback of those approaches is that
finding the root of the polynomial $2^\tau -2^{-\tau} z = 0$ will
require $\widetilde O(\tau)$ bits operations where $\widetilde O(\cdot)$
means that we omit the logarithmic factors. On the other hand, this
simple equation can be solved in $\widetilde O(\log \tau)$ bit
operations if we use floating-point
arithmetic.

Polynomial where the coefficients have different orders of magnitude appear in several applications.
For example, truncating a series expansion of a generalized
hypergeometric function such as $e^z$ to order $d$
may return a polynomial where the $k$-th coefficient has a
magnitude in $\varTheta(1/k!)$. The characteristic polynomial of a matrix
also has coefficients with different order of magnitude. Solving
multivariate polynomial systems can be done by eliminating variables and
reducing the problem to a univariate polynomial having coefficients with
different orders of magnitude \cite{BPRbook06}.

Some work in the literature address the problem of handling large orders of magnitudes for
the root-finding problem,
notably when using methods based on the Newton diagram and Graeffe
iterations (\cite{ostrowski1940recherches,schonhage1982fundamental,gourdon:inria-00074820,pan2000approximating,malajovich2001geometry}
among others). In
these approaches, intermediate values can have large order of magnitudes
and a process of normalisation can be applied to reduce their sizes
\cite{malajovich2001tangent,grau63reduction,henrici1993applied}. However, the algorithm they describe
is quadratic in the degree of the input polynomial.

\begin{figure}
  \includegraphics[width=0.9\linewidth]{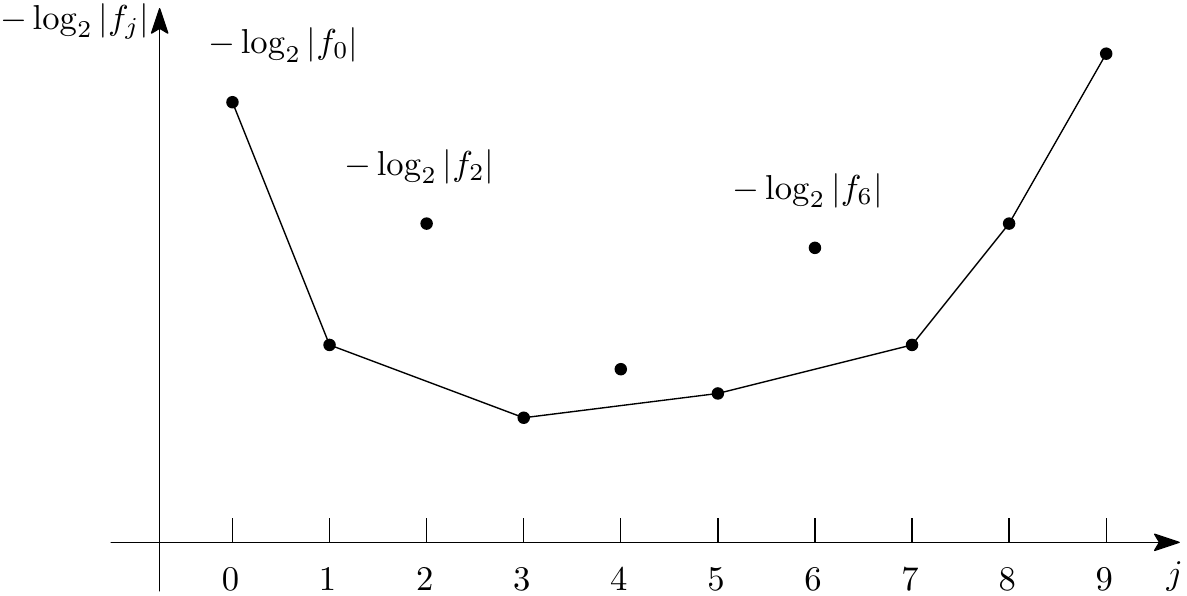}
  \caption{The Newton polygon of the polynomial $f$, using the valuation
           $-\log_2 |f_j|$ for the $j$-th coefficient.}
  \label{figure:newton_polygon}
  \vspace*{-0.5cm}
\end{figure}

Using the Newton polygon of $f$ we develop a new method adapted to the working
relative precision of the input coefficients. Let $f$ be a
polynomial of degree $d$ with floating-point coefficients of magnitude less than
$2^\tau$ and represented with $m$ significant bits. We will compute a
piecewise approximation of $f$ by $O(d/m)$ polynomials $g$ of degree
$O(m)$, where the coefficients of each polynomial $g$ is represented
with $O(m)$ significant digits. This representation will allow us to
improve state-of-the-art bounds on the problems of evaluating $f$ on
$d$ points and 
of finding its roots.

After stating formally our main results in
Section~\ref{sec:preliminaries}, we will describe in Sections~\ref{section_rings}
and~\ref{section_approximation} how to compute such a piecewise
approximation
in $\widetilde O(d(m+\log \tau))$ bit operations. Then we
will show how to use this data-structure to evaluate $f$
(Section~\ref{section_evaluation}) and to approximate or isolate the roots of $f$
(Section~\ref{section_isolation}). Finally we describe experimental
results of our prototype implementations on different families of
polynomials (Section~\ref{section_benchmarks}).

\section{Preliminaries}
\label{sec:preliminaries}
In the article, our input is the polynomial $f(z) = f_0 + \cdots + f_d
z^d$. We assume that the coefficients of $f$ are represented with
floating-point numbers and our goal is to design fast algorithms for the
evaluation or the root-finding problem, with the same error bounds as if
we had used classical algorithms with floating-point arithmetic.

For that we will first compute in Sections~\ref{section_rings}
and~\ref{section_approximation} a piecewise polynomial approximation of $f$
defined on a partition of the complex plane.
Before stating our main result, we recall two mathematical tools that
will be fundamental in our algorithms: the \emph{condition number} and
the \emph{Newton polygon}. 

\paragraph{Notations}
We will use the following notations to describe our piecewise polynomial
approximation.  For a complex number $\gamma$ and a positive real
$\rho$, we denote by $D(\gamma, \rho)$ the disk of center $\gamma$ and
radius $\rho$. For two positive real number $r_1,r_2$, we denote by
$R(r_1,r_2)$ the ring $D(0,r_2)\setminus D(0,r1)$. Moreover, for a
polynomial $f$ we let:
\begin{align*}
  \widetilde f(|z|) &= |f_0| + \cdots + |f_d| |z|^d, & 
  \|f\|_1 &= |f_0| + \cdots + |f_d|\\
  \hat f(|z|) &= \max_{0\leq j \leq d}(|f_j||z|^j), &
  f_{\ell,u}(z) &= f_l + \cdots + f_u z^{u-\ell}.
\end{align*}

\subsection{Relative condition number}
In floating-point representation, the errors are relative.
Let $f_\varepsilon$ be the polynomial obtained by adding a relative
error bounded by $\varepsilon$ on the coefficients of $f$. This amounts
to multiply all the coefficients $f_j$ by $(1+\varepsilon_j)$ with
$|\varepsilon_j| \leq \varepsilon$.

In the \emph{evaluation problem}, for any complex point
$z$ we have
$$|f(z) - f_\varepsilon(z)| \leq \varepsilon \widetilde f(|z|)$$
and this bound is tight. In Theorem~\ref{thm:evaluation} we focus on the problem
of finding the approximate 
value
of $f(z)$ with the same error bound.

For the \emph{root-finding problem}, we can also define the relative condition
number \cite{graillat2008accurate},\cite[\S 1.6]{higham2002accuracy}
associated to a relative perturbation of the coefficients of $f$. More
precisely, if $\zeta$ is a root of $f$, and $\dot \zeta$ is the closest
root of $f_\varepsilon$, the relative condition number of $\zeta$
denoted by $\cond(f,\zeta)$ is
defined as $\lim_{\varepsilon \rightarrow 0} |\zeta - \dot
\zeta|/(|\zeta|\varepsilon)$. Using the Taylor expansion $f_{\varepsilon}(\zeta) = (\zeta-\dot
\zeta)f'_\varepsilon(\zeta) + O(|\zeta-\dot\zeta|^2)$ and
$| f(\zeta) - f_\varepsilon(\zeta)|\leq \varepsilon \widetilde f(|\zeta|)$, this leads to:
$$\cond(f,\zeta) = \frac{\widetilde f(|\zeta|)}{|\zeta||f'(\zeta)|}.$$
This means that the number of bits of precision required to compute
the first significant bit of $\zeta$ is in $O(\log(\cond(f,\zeta))$. In
Section~\ref{section_isolation}, we focus on the problem of finding the
roots of $f$ with a number of bit operations quasi-linear in
$\log(\cond(f,\zeta))$ and quasi-linear in $d$.

\subsection{Newton polygon}
Given a non-zero complex coefficient $f_j$ of the polynomial $f$, we can
associate it with the value  $-\log_2 |f_j|$. Then for each index
$0\leq j \leq d$, let $P_j$ be the point $(j, -\log_2 |f_j|)$. The
\emph{Newton polygon} of $f$ is the lower convex hull of the set of
points $P_j$, as illustrated on Figure~\ref{figure:newton_polygon}.
It has been used to give a rough first estimation of the
modules of the roots of a polynomial
\cite{gourdon:inria-00074820, ostrowski1940recherches, schonhage1982fundamental, pan2000approximating, malajovich2001geometry, BAna00, bini2014solving, imbach2021root}.

We will denote by $H$ the Newton polygon of $f$, and use it 
to partition the complex plane in rings where the variation of
$\widetilde f$ is bounded (Section~\ref{section_rings}).
%


%
%
%
%
%
%
%

\subsection{Main results}

We can now state our main results, with some reasonable assumptions on
the precision such as $m>\log d$ for a simpler presentation.

The first theorem shows that we can evaluate a polynomial of degree $d$
on $d$ points with a complexity quasi-linear in $d$ times the size of a
floating-point representation of the coefficients. In particular, if the
magnitudes of the coefficients are bounded by $2^\tau$, this means that
our algorithm will be quasi-linear in $d\log\tau$. This is an improvement over a recent
result \cite{moroz2022new} where the complexity was quasi-linear in $d\tau$.

\begin{theorem}[Evaluation]
  \label{thm:evaluation}
  Let $f$ be a polynomial of degree $d$ with complex floating-point coefficients
  of magnitudes less than $2^{\tau}$ and let $m$ be a positive integer.
  It is possible to compute in $\widetilde O(d(m+\log \tau))$ bit
  operations a piecewise polynomial approximation $f_{pw}$
  such that
  $$|f(z) - f_{pw}(z)| \leq 2^{-m} \widetilde f(z)$$ 
  for all $z \in \mathbb
  C$. Moreover, assuming that $m>\log d$, and given $d$ complex points
  $z_1,\ldots,z_d$ 
  with
  $|z_j|\leq 2^{\tau}$, it is possible to evaluate:
  \begin{itemize}
    \item[(i)] $f_{pw}(z_1)$ in $\widetilde O(m(m+\log\tau))$ bit operations,
    \item[(ii)] $f_{pw}(z_1),\ldots,f_{pw}(z_d)$ in $\widetilde
      O(d(m+\log \tau))$ bit operations.
  \end{itemize}
\end{theorem}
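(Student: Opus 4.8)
The plan is to prove Theorem~\ref{thm:evaluation} in two stages: first build the data structure $f_{pw}$, then run fast evaluation routines on top of it.

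\textbf{Constructing $f_{pw}$.} First compute the Newton polygon $H$ of $f$: this needs the $d+1$ numbers $-\log_2|f_j|$, each to $O(m+\log\tau)$ bits, followed by a lower-convex-hull computation in $O(d)$ arithmetic operations on those numbers, hence $\widetilde O(d(m+\log\tau))$ bit operations. From $H$ one reads off a partition of $[0,+\infty)$ into $O(d/m)$ intervals of moduli such that, on the $p$-th interval, only $O(m)$ consecutive indices $j$ satisfy $|f_j|\,|z|^j > \tfrac{2^{-m}}{d+1}\,\hat f(|z|)$ for some $z$ of that modulus; here convexity of $H$ is used to guarantee that, away from the edge of $H$ active at a given modulus, the terms $|f_j|\,|z|^j$ decay geometrically. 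Writing $R(r_p,r_{p+1})$ for the corresponding ring and $[\ell_p,u_p]$ for the enclosing window of active indices, set $g_p:=f_{\ell_p,u_p}$, rescaled by the power of two given by the slope of the active edge so that the coefficients of $g_p$ are $O(m)$-bit with exponents of size $O(\log\tau)$. Then $\deg g_p=u_p-\ell_p=O(m)$ and, for $z\in R(r_p,r_{p+1})$, $|f(z)-g_p(z)|\le\sum_{j\notin[\ell_p,u_p]}|f_j|\,|z|^j\le(d{+}1)\cdot\tfrac{2^{-m}}{d+1}\hat f(|z|)=2^{-m}\hat f(|z|)\le 2^{-m}\widetilde f(|z|)$. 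Putting $f_{pw}|_{R(r_p,r_{p+1})}:=g_p$ gives the first assertion; forming all the $g_p$ only copies $\sum_p\deg g_p=O(d)$ coefficients and shifts exponents, so the whole preprocessing is $\widetilde O(d(m+\log\tau))$. (Sections~\ref{section_rings} and~\ref{section_approximation} carry this out in detail.)

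\textbf{Evaluation at one point (i).} Locate the ring containing $z_1$ by binary search among the $O(d/m)$ radii $r_p$; since $m>\log d\ge\log(d/m)$, the $O(\log(d/m))$ comparisons of $O(m+\log\tau)$-bit numbers fit within the Horner budget below. Then evaluate $g_p(z_1)$ by Horner's scheme in floating-point arithmetic with $O(m)$ mantissa bits and $O(\log m+\log\tau)$, hence $O(m+\log\tau)$, exponent bits: the hypothesis $|z_1|\le 2^\tau$ is exactly what bounds the exponents of the partial sums (which can reach $m\tau$ in magnitude) by $O(\log(m\tau))$ bits. Each of the $O(m)$ arithmetic operations then costs $\widetilde O(m+\log\tau)$ bit operations, for a total of $\widetilde O(m(m+\log\tau))$. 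The backward-error analysis of Horner's scheme gives a computed value within $cm\cdot 2^{-m}\widetilde{g_p}(|z_1|)$ of $g_p(z_1)$; absorbing $cm$ into $O(\log m)$ extra mantissa bits and using $\widetilde{g_p}(|z_1|)\le\widetilde f(|z_1|)$ together with the bound above, the output is within $2^{-m}\widetilde f(|z_1|)$ of $f(z_1)$.

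\textbf{Evaluation at $d$ points (ii).} Sort $z_1,\dots,z_d$ by modulus and merge with the sorted radii $r_p$ to assign each point to its ring, in $\widetilde O(d(m+\log\tau))$. For each ring $p$ receiving $k_p\ge1$ points, run fast multipoint evaluation of the degree-$O(m)$ polynomial $g_p$ at those $k_p$ points in the floating-point precision of (i) — equivalently, split them into $\lceil k_p/m\rceil$ groups of at most $m$ points and do one evaluation per group — at cost $\widetilde O((m+k_p)(m+\log\tau))$ bit operations and with the same per-point error bound as in (i). Since $\sum_p k_p=d$ and the number of non-empty rings is at most the number of rings, $O(d/m)$, we get $\sum_{p:\,k_p\ge1}(m+k_p)=O(d)$, hence a total of $\widetilde O(d(m+\log\tau))$.

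\textbf{Expected main obstacle.} The bulk of the argument is the construction of $f_{pw}$: proving that $H$ really yields only $O(d/m)$ rings, each carrying an $O(m)$-degree truncation of $f$ accurate to relative error $2^{-m}$, and that all of this is computable in quasi-linear time. The delicate case is a long, nearly flat edge of $H$ spanning far more than $m$ indices, where no single truncation is accurate over the whole range of moduli at which that edge is active; controlling it — by subdividing the edge and exploiting the quantitative geometric decay furnished by convexity, together with a monomial change of variable when the active indices are sparse — is the technical heart, and is also what keeps the preprocessing quasi-linear. A secondary point is keeping the fast multipoint step of (ii) numerically stable at working precision $O(m)$, which is why its evaluation points are first grouped by ring.
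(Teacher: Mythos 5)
There is a genuine gap at the heart of your construction of $f_{pw}$. You claim that the Newton polygon yields rings on each of which only $O(m)$ \emph{consecutive} indices $j$ satisfy $|f_j|\,|z|^j > \tfrac{2^{-m}}{d+1}\hat f(|z|)$, so that a truncation $g_p=f_{\ell_p,u_p}$ of degree $O(m)$ (times $z^{\ell_p}$) approximates $f$ on the ring. This is false whenever the Newton polygon has a long edge: for $f(z)=1+z+\cdots+z^d$ and $|z|=1$ all $d+1$ terms equal $\hat f(|z|)=1$, so every window of $O(m)$ consecutive monomials discards terms summing to $\Theta(d)$, far above the tolerance $2^{-m}\widetilde f(1)=(d+1)2^{-m}$; no choice of ring boundaries helps, because the obstruction occurs at a single modulus, not across a range of moduli. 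Convexity of $H$ gives geometric decay only away from the active \emph{edge}, not along it. This is exactly why the paper's Lemma~\ref{lem:rings} bounds only the \emph{total} number of retained monomials ($\sum_n(\delta_n+1)=O(d)$) and explicitly allows a single $f_{\ell_n,u_n}$ to have degree $d$, and why a second stage (Section~\ref{section_approximation}) is needed: each ring is cut into $K_n=O(\delta_n/m+1)$ angular sectors, and on each sector one takes the degree-$\min(\delta_n,4m)$ truncated Taylor expansion of $f_{\ell_n,u_n}$ in the local variable $t=(z-\gamma_{n,k})/\rho_{n,k}$ (Lemma~\ref{lem:approximations}); truncation works there because the disk radius $\rho_{n,k}$ is small relative to $\gamma_{n,k}$ (of order $m/\delta_n$), not because few monomials dominate. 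Your ``expected main obstacle'' paragraph flags the flat-edge case, but the remedies you sketch (subdividing the edge, a monomial change of variable when the active indices are sparse) do not address the dense case above, and you give no argument for it.

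The gap propagates to the rest of the proof: the costs you claim in (i) and (ii) rest on $\deg g_p=O(m)$, which you have not established, and producing the low-degree sector polynomials is itself nontrivial --- the paper needs a baby-step/giant-step scheme combining fast Taylor shift, fast composition and FFTs to compute all $g_{n,k}$ on a ring in $\widetilde O(\delta_n(m+\log\tau))$ bit operations, plus a normalization argument (Lemma~\ref{lem:boundM}) to keep the accumulated absolute error below $2^{-m}\hat f(r_n)/r_n^{\ell_n}$; none of this is replaced by anything in your proposal, where forming the pieces is claimed to be mere coefficient copying. Two smaller points: the number of rings is only $O(d)$ in general (rings with $\delta_n=0$), not $O(d/m)$, and after the sector decomposition one still must evaluate the factor $z^{\ell_n}$ (the paper does it via $e^{\ell_n\log z}$ with AGM), a step absent from your single-point analysis. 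Your outline of part (ii) --- group the points by piece and use fast multipoint evaluation, with $\sum_p(m+k_p)=O(d)$ --- does match the paper's strategy once correct degree-$O(m)$ pieces are available.
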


The second theorem allows us to
find the roots of polynomials with a number of bit operations
quasi-linear in $\log \tau$,
and to handle polynomials with
large coefficients such as resultant
polynomials of two
bivariate polynomials (see
Section~\ref{subsec_numerical_results}).

\begin{theorem}[Root finding]
  \label{thm:rootfinding}
  Let $f$ be a polynomial of degree $d$ with complex floating-point coefficients
  of magnitudes less than $2^{\tau}$ and let $m$ be a positive integer.
  It is possible to compute the $m$ most significant bits of
  the roots $\zeta$ of $f$ such that $\cond(f,\zeta) \leq 2^m$ in
  $\widetilde O(d(m+\log \tau))$ bit operations.

  Moreover, if $\kappa = \max_{\zeta \mid f(\zeta)=0} \cond(f,\zeta)$,
  it is possible to compute the $m$ most significant bits of all the
  roots of $f$ in $\widetilde O(d(m+\log \tau + \log \kappa)$ bit
  operations.
\end{theorem}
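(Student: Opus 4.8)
The plan is to reduce the root finding of $f$ to the root finding of the $O(d/m)$ polynomials of degree $O(m)$ that constitute the piecewise approximation $f_{pw}$ of Theorem~\ref{thm:evaluation}, each attached to a ring $R(r_1,r_2)$ of the partition of $\mathbb{C}$ induced by the Newton polygon $H$. First I would invoke Theorem~\ref{thm:evaluation} with precision parameter $m' = m + O(\log d)$ (to absorb later rounding losses) to obtain, in $\widetilde O(d(m+\log\tau))$ bit operations, the collection of pairs $(R_i, g_i)$ together with the guarantee $|f(z) - g_i(z)| \le 2^{-m'}\widetilde f(|z|)$ for $z \in R_i$. By convexity of $H$, on each ring only a window of $O(m)$ consecutive coefficients is relevant, so each $g_i$ can be taken of the form $f_{\ell,u}$ with $u-\ell = O(m)$, with $O(m)$-bit coefficients, and $\widetilde f$ varies by at most a constant factor on $R_i$.

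Next, on each ring $R_i$ I would rescale via $z = r_i w$ so that the relevant part of $g_i$ becomes a polynomial $\bar g_i$ of degree $O(m)$ with $O(m)$-bit integer coefficients, and run a near-optimal complex root solver (as in \cite{becker2018near}, or \cite{pan2002univariate,moroz2022new}) to approximate all roots of $\bar g_i$ to $O(m)$ bits. Approximating all roots of a degree-$O(m)$ polynomial with $O(m)$-bit coefficients to $O(m)$ bits costs $\widetilde O(m^2)$ bit operations, so over the $O(d/m)$ rings this sums to $\widetilde O(dm)$; the cost of the scalings by the $r_i$, which can be as large as $2^\tau$, contributes the additive $\widetilde O(d\log\tau)$. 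Undoing the scalings and retaining only the approximate roots that actually lie in $R_i$ yields $O(d)$ candidate roots, each known to $m$ significant bits.

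The correctness argument is the main obstacle. On $R_i$, the bound $|f - g_i| \le 2^{-m'}\widetilde f$ together with the bounded variation of $\widetilde f$ gives, via a Rouché-type estimate, that $f$ and $g_i$ have the same number of roots (with multiplicity) inside $R_i$; since the $R_i$ cover $\mathbb{C}$, summing over $i$ accounts for all $d$ roots. For a root $\zeta \in R_i$ of $f$ with $\cond(f,\zeta) = \widetilde f(|\zeta|)/(|\zeta||f'(\zeta)|) \le 2^m$, a Taylor expansion of $g_i$ at $\zeta$ shows that $g_i$ has a root at distance $O(|\zeta|\, 2^{-m'}\cond(f,\zeta))$ from $\zeta$, which is below $|\zeta| 2^{-m}$ by the choice of $m'$, so its $m$ most significant bits are recovered. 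The delicate points are roots of $f$ lying near the boundary of two consecutive rings, which I would handle by letting the rings overlap by a constant factor so that every root is well inside at least one $R_i$, and the spurious roots of $g_i$ falling outside $R_i$, which are truncation artifacts and are simply discarded — this is precisely why the counting step above is needed, to certify that no genuine root of $f$ in $R_i$ is lost.

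For the second statement, $\kappa$ is not known in advance, so I would run the whole procedure with a doubling sequence of precisions $m + O(\log d),\ 2(m + O(\log d)),\ 4(m + O(\log d)),\ \ldots$, stopping as soon as a root-counting certificate (e.g.\ Pellet/Rouché tests on disks around the computed approximations) accounts for all $d$ roots to $m$ significant bits. This happens once the precision reaches $O(m + \log\kappa)$, by the same condition-number estimate applied with $\cond(f,\zeta) \le \kappa$, and since the running times form a geometric series the total cost is $\widetilde O(d(m + \log\tau + \log\kappa))$.
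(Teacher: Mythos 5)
The pivotal step of your reduction --- ``by convexity of $H$, on each ring only a window of $O(m)$ consecutive coefficients is relevant, so each $g_i$ can be taken of the form $f_{\ell,u}$ with $u-\ell=O(m)$'' --- is false, and the complexity claim collapses with it. Convexity of the Newton polygon only guarantees that the set of dominant indices on a ring is an \emph{interval} $[\ell_n,u_n]$; it gives no bound on its length $\delta_n=u_n-\ell_n$, which can be as large as $d$. For instance for $f=1+z+\cdots+z^d$, or for the hyperbolic random polynomials used in the paper's benchmarks, on the ring around $|z|=1$ all $d+1$ coefficients are within a bounded factor of $\hat f(|z|)$, so the ring polynomial is essentially $f$ itself. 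What is actually true (Lemma~\ref{lem:rings}(ii)) is only that the window lengths \emph{summed over all rings} are $O(d)$. Hence ``run a near-optimal solver on each ring polynomial'' is not $\widetilde O(m^2)$ per ring: a single ring can require factoring a degree-$d$ polynomial, which costs superlinearly in $d$ and breaks the target $\widetilde O(d(m+\log\tau))$ precisely on the natural hard instances. The missing ingredient is the paper's second-level decomposition (Section~\ref{section_approximation}): each ring is further cut into $K=O(\delta_n/m+1)$ angular sectors, and on each sector $f_{\ell_n,u_n}$ is replaced by a truncated Taylor expansion $g_{n,k}$ of degree $\min(\delta_n,4m)$; Algorithm~\ref{alg:approximations} computes all of them simultaneously in $\widetilde O(\delta_n(m+\log\tau))$ bit operations by a baby-step giant-step composition plus FFT. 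Only after this degree reduction does the ``$O(d/m)$ solves of degree-$O(m)$ polynomials at cost $\widetilde O(m^2)$ each'' accounting, which you invoke, become valid. (Incidentally, the claim that $\widetilde f$ varies by at most a constant factor on a ring is also wrong in general; the radii satisfy $(r_{n+1}/r_n)^{\delta_n}\le 2^m$, so $\hat f$ can vary by $2^{\Theta(m)}$ or more across a ring.)

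Your certification route also diverges from the paper and is shakier than needed. The first statement of the theorem does not promise all roots, only those with $\cond(f,\zeta)\le 2^m$; the paper therefore certifies candidates individually, by evaluating $f$ and $f'$ at every approximate root with the fast piecewise evaluation and applying a Kantorovich/Newton-basin criterion (Lemma~\ref{lem:rootfinding}), rather than counting roots. A global Rouch\'e count of $f$ against $g_i$ on each ring would have to cope with roots or clusters straddling ring and sector boundaries, where $|f-g_i|\le 2^{-m}\widetilde f(|z|)$ need not dominate $|f|$ at the required precision when ill-conditioned roots are present, so it cannot be carried out at working precision $m$ in general; your overlapping-rings fix does not remove this obstruction, it only moves it. Your precision-doubling scheme for the second statement is fine in spirit and matches the intended argument, but the stopping test should be ``all $d$ certified isolating disks have been produced'' using the per-root Kantorovich certificates, not a Rouch\'e/Pellet count.
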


Our approach consists in computing a piecewise approximation in two
steps. A first approximation is computed on a partition of the complex
plane in concentric rings, and is described in Section~\ref{section_rings}. Then we refine our piecewise
approximation by partitioning each ring uniformly in angular sectors, as
described in Section~\ref{section_approximation}.



%
%
\section{Piecewise approximation over rings}
\label{section_rings}

\begin{figure}
  \parbox[b]{0.75\linewidth}{
  \includegraphics[width=0.285\linewidth]{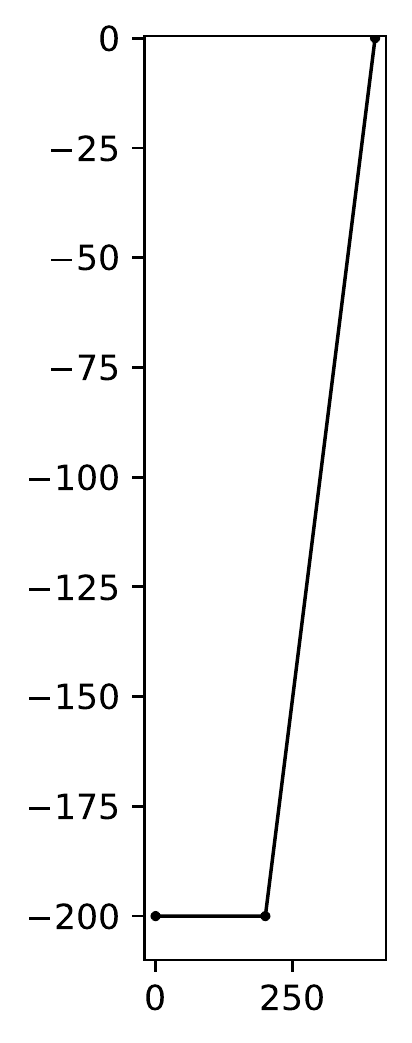}
  \includegraphics[width=0.715\linewidth]{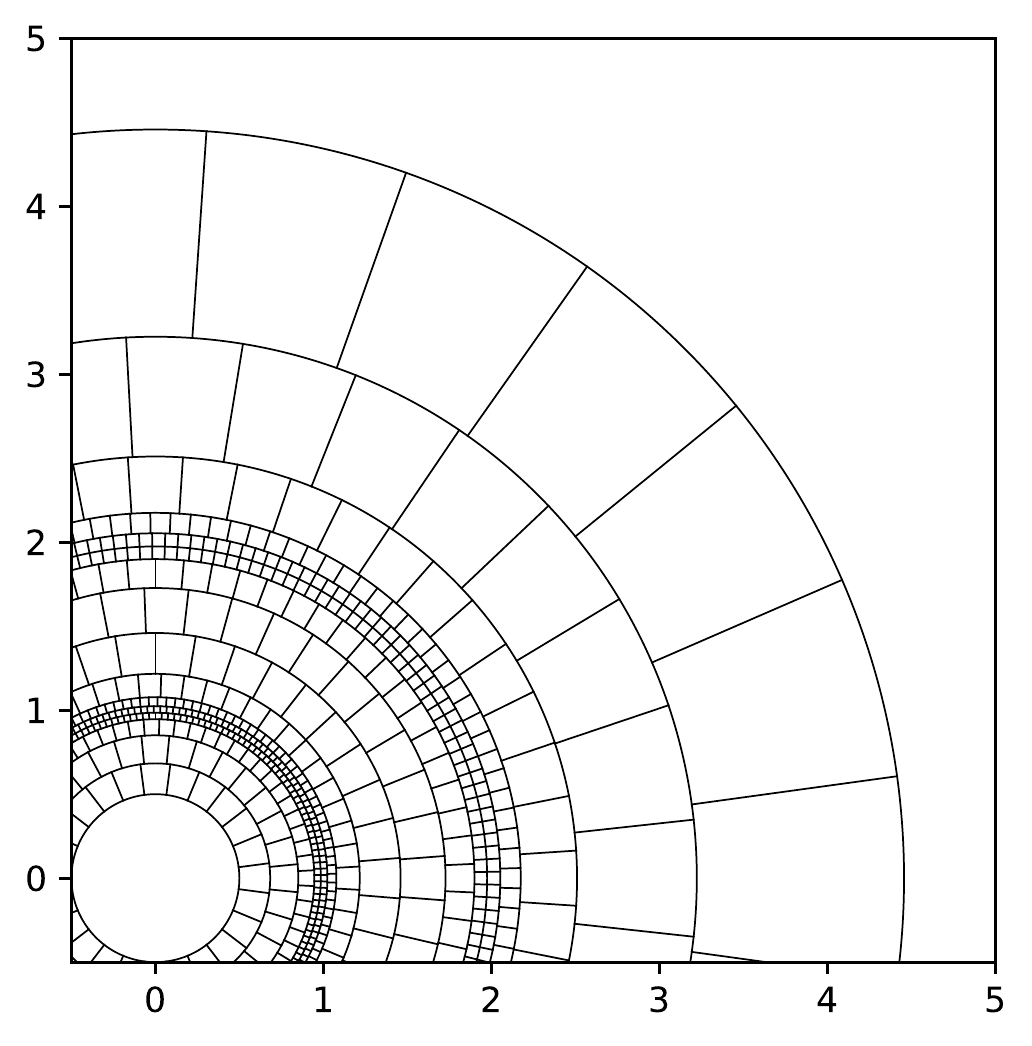}
  }
\caption{Newton polygon (left) 
and sector partition (right) associated to
the polynomial $(z^{200} - 1)(z^{200}-2^{200})$} 
\label{figure:example_polygon}
\vspace*{-0.5cm}
\end{figure}

Our first piecewise approximation is based on a subdivision of the complex
plane with concentric rings $R_n$ centered at the origin, as illustrated in
Figure~\ref{figure:example_polygon}. In
Algorithm~\ref{alg:rings}, we compute the rings and we select dominant
monomials of $f$. This returns a list of rings $R_n = R(r_n,r_{n+1})$
and a pair of indices $(\ell_n,u_n)$ associated to each ring such that
evaluating $z^{\ell_n}f_{\ell_n,u_n}(z)$ on $R_n$ is sufficient to
evaluate $f$ up to a given error bound. One of the goal of this
algorithm is to choose the rings such that the sum of the number of
monomials of the polynomials $f_{\ell_n,u_n}$ is linear in
$d$. The main result of this section is summarized in the following
Lemma.
 
\begin{lemma}
  \label{lem:rings}
  Given a polynomial $f$ of degree $d$ with floating-point coefficients
  with $m$-bits
  mantissa and magnitude less than $2^{\tau}$, it is
  possible to compute in $\widetilde O(d(m+\log \tau))$ bit operations
  $N$
  rings 
  $R_n = R(r_n, r_{n+1})$
  and
  extract $N$ polynomials $f_{\ell_n,u_n}$ of degrees $\delta_n$ such
  that:
  \begin{gather}
    \tag{i} 2^{\frac m 2}r_n^{\delta_n} \leq r_{n+1}^{\delta_n} \leq 2^m
    r_n^{\delta_n} \makebox[0pt]{\hspace{2cm} if $\delta_n \geq 1$, and}\\
    \tag{ii} \sum_{n=0}^{N-1} (\delta_n + 1) \leq 65 d + 1.
  \end{gather}
  And for all $z\in R_n$:
  \begin{gather}
    \tag{iii} |f(z) - z^{\ell_n}f_{\ell_n,u_n}(z)| \leq (d-\delta_n) 2^{-m} \hat f(|z|).
  \end{gather}
\end{lemma}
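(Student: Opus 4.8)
\noindent The proof reads $\hat f$ and the negligibility of individual monomials off the Newton polygon $H$, through the change of variables $r = 2^t$. Write $v_j = -\log_2|f_j|$, so $P_j = (j,v_j)$ and, for $|z| = r = 2^t$, $|f_j|\,|z|^j = 2^{\,jt - v_j}$. Then $\hat f(r) = 2^{-c(t)}$, where $c(t) = \min_j(v_j - jt)$ is the ordinate at the origin of the supporting line of $H$ of slope $t$, and $f_j$ is \emph{negligible} at radius $r$, i.e. $|f_j|r^j \le 2^{-m}\hat f(r)$, exactly when $v_j - jt \ge c(t)+m$, that is, when $P_j$ lies above the slope-$t$ supporting line of $H$ shifted up by $m$. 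Let $\mathcal N(t)$ be the set of non-negligible indices at radius $2^t$, and put $\ell(t) = \min\mathcal N(t)$, $u(t) = \max\mathcal N(t)$; since $v_j \ge H(j)$, any $j\in\mathcal N(t)$ satisfies $H(j) - jt < c(t)+m$, and $x\mapsto H(x)-xt$ being convex this confines $\mathcal N(t)$ to a bounded window of indices. I would then build the rings greedily (this is Algorithm~\ref{alg:rings}): from $r_0 = 2^{t_0}$ with $t_0$ just below the smallest slope of $H$, pick $t_{n+1}$ as large as possible subject to $\delta_n(t_{n+1}-t_n)\le m$, where $\delta_n := u(t_{n+1}) - \ell(t_n)$, set $\ell_n = \ell(t_n)$ and $u_n = u(t_{n+1})$, and keep extending the (half-open) annulus $R_n = R(r_n,r_{n+1})$ while $\delta_n = 0$ since the constraint is then vacuous, stopping once $u_n$ reaches the largest index with $f_j\neq 0$. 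A lemma I will need is that $\ell(t)$ and $u(t)$ are non-decreasing step functions of $t$; I would prove this by induction on $d$, using that $f_d$, once non-negligible at some radius, stays non-negligible at all larger radii (symmetrically $f_0$ at smaller radii), so that away from the extreme slopes one reduces to a polynomial of smaller degree. Granting this, (iii) is immediate: if $j\notin[\ell_n,u_n]$ then monotonicity gives $j\notin\mathcal N(t)$ for every $t\in[t_n,t_{n+1})$, so each of these $d-\delta_n$ monomials contributes at most $2^{-m}\hat f(|z|)$ to $|f(z) - z^{\ell_n}f_{\ell_n,u_n}(z)|$ on $R_n$.

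For (i), the upper bound in $r_{n+1}^{\delta_n}\le 2^m r_n^{\delta_n}$ is just the stopping rule after exponentiating $\delta_n(t_{n+1}-t_n)\le m$. For the lower bound I distinguish why the greedy step halts. If it halts with the product equal to $m$ on a maximal piece where $u$ is constant, then $\delta_n(t_{n+1}-t_n) = m \ge m/2$. If it halts at a slope $\theta$ where $u$ jumps by one, then just below $\theta$ the span $\delta_n$ satisfied $\delta_n(\theta-t_n)\le m$ while the jump to $\delta_n+1$ forces $(\delta_n+1)(\theta-t_n) > m$, so $\delta_n(\theta-t_n) > m\cdot\delta_n/(\delta_n+1) \ge m/2$ whenever $\delta_n\ge 1$. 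Hence $m/2\le\delta_n(t_{n+1}-t_n)\le m$ when $\delta_n\ge 1$, which is (i); simultaneous jumps of $u$ by more than one can only come from exact ties among the $|f_j|$ and are removed by an arbitrarily small perturbation of the $v_j$.

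Condition (ii) is the crux. Writing $\sum_n(\delta_n+1) = \sum_n\#([\ell_n,u_n]\cap\mathbb Z) = \sum_j c_j$ with $c_j = \#\{n : \ell_n\le j\le u_n\}$, and using $\ell_n=\ell(t_n)$, $u_n=u(t_{n+1})$ with $\ell,u$ non-decreasing, one sees that $c_j$ counts the rings whose slope range meets the interval $W_j = \{t : \ell(t)\le j\le u(t)\}$. The plan is to distribute $\sum_j c_j$ over the edges of $H$. If $e_k$ is an edge of slope $\sigma_k$ and index-length $L_k$ (so $\sum_k L_k\le d$), then for $|t-\sigma_k| < m/L_k$ the whole edge $e_k$ lies within vertical distance $m$ of the slope-$t$ supporting line, so $u(t)-\ell(t)\ge L_k$ there; this forces $\delta_n\ge L_k$, hence ring width $\ge m/(2\delta_n)$, for the $O(1)$ rings meeting this window, which therefore contribute $O(L_k)$. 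Moving away from $\sigma_k$, at distance $\rho>m/L_k$ the hull — hence $\mathcal N(t)$ — reaches width only $O(m/\rho)$ on the $e_k$ side of the supporting point, so $\delta_n = O(m/\rho)$ and the ring widths grow geometrically with $\rho$; thus the $O(\log L_k)$ rings in each such tail contribute a telescoping $O(L_k)$. Every ring with $\delta_n\ge 1$ is attributed this way to the one or two nearest significant slopes, and single-monomial rings ($\delta_n=0$) occur at most once between consecutive rings of positive span; since the latter number $\sum_k O(\log L_k + 1) = O(\sum_k L_k) = O(d)$, we get $\sum_n(\delta_n+1) = O(d)$, and carrying the constants through this accounting yields the stated bound $65d+1$.

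For the complexity, each $v_j = -\log_2|f_j|$ is computed to sufficiently many bits in $\widetilde O(m+\log\tau)$ bit operations, for a total of $\widetilde O(d(m+\log\tau))$; the lower hull $H$ of the points $(j,v_j)$, whose ordinates carry $\widetilde O(m+\log\tau)$ bits, is obtained by a monotone chain in $\widetilde O(d(m+\log\tau))$; and the greedy walk, maintaining the current active window and the next slope at which it changes, outputs each ring in $\widetilde O((\delta_n+1)(m+\log\tau))$, hence $\widetilde O(d(m+\log\tau))$ in total by (ii). The main obstacle is the accounting in (ii) — pinning down the geometric decay of the active span in the tails of each slope and checking that no ring is over-charged — with the monotonicity of $\ell(t)$ and $u(t)$, used throughout, a second point requiring care.
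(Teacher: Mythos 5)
Your overall strategy (sweep the tangent slope, track the window of non-negligible indices, prove (iii) from the shifted hull, and bound $\sum(\delta_n+1)$ by a charging argument) is the same in spirit as the paper's, and your treatment of (iii), of the monotonicity of $\ell(t),u(t)$, and of the complexity is sound. The genuine gap is in your proof of the lower bound of (i). Your greedy ends a ring at the largest $t_{n+1}$ with $\delta_n(t_{n+1}-t_n)\le m$, and your case analysis assumes that when the walk is stopped by a jump of $u$, the jump is by one, dismissing larger jumps as "exact ties removable by perturbation". That is not so: $u(t)$ jumps by $\Delta>1$ whenever a single index $u_{\mathrm{old}}+\Delta$ crosses the supporting line shifted by $m$ while the intermediate points $P_j$ stay more than $m$ above the hull — no tie among the $|f_j|$ is involved, so no perturbation removes it. In that situation the halting condition only gives $(u_{\mathrm{new}}-\ell_n)(\theta-t_n)\ge m$, hence $\delta_n(t_{n+1}-t_n)\ge m\,\delta_n/(\delta_n+\Delta)$, which is below $m/2$ as soon as $\Delta>\delta_n$. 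Concretely, if at $t_n$ the window is $\{\ell_n,\ell_n+1\}$, the points $P_{\ell_n+2},\dots,P_{\ell_n+9}$ lie far above the hull, and $P_{\ell_n+10}$ enters at $\theta=t_n+m/5$, you must stop at $\theta$ with $\delta_n=1$ and ring width $m/5<m/2$, so (i) fails for your construction (and, strictly, your "largest feasible $t_{n+1}$" is not even attained there). The paper avoids this by reversing the logic: Algorithm~\ref{alg:rings} first selects $u_n$ by a lookahead criterion on the entry slopes ($m_\ell+m_u\le m$, line~\ref{line:u}) and then sets $s_{n+1}=s_n+m/(\delta_n+1)$, so that (i) holds \emph{by construction} and only (iii) requires the tangent-line argument; to repair your version you would need a similar lookahead (or a floor on the ring width), after which (iii) must be re-proved since the entering index is then inside $[\ell_n,u_n]$.

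A secondary issue is (ii). Your edge-by-edge charging (windows $|t-\sigma_k|<m/L_k$ contributing $O(L_k)$, geometric decay of the span in the tails, "$O(1)$ rings per window", no over-charging) is plausible but left entirely unverified, it leans on the ring-width lower bound $\approx m/(2\delta_n)$ — exactly the point in doubt above — and the stated constant $65d+1$ is asserted rather than derived. The paper's proof is different and self-contained: for each ring with $\delta_n\ge1$ it takes the regions $U_n$ and $L_n$ cut out by the two tangent lines inside the vertical band $[\ell_n,u_n]$, shows $|U_n|+|L_n|\ge\delta_n m/8$, that the $U_n$ (resp.\ $L_n$) are pairwise disjoint, and that all of them lie in the band between $H$ and $H_{-2m}$ of area at most $2dm$, giving $\sum\delta_n\le 32d$; the $\delta_n=0$ rings are counted separately via the distinctness of their $\ell_n$. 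You would either need to carry out your tail accounting in full (including the over-charging and constant bookkeeping) or switch to an area argument of this kind.
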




We use the Newton polygon associated to $f$
to compute such approximation polynomials,
which allows us to reduce computing rings and indices that satisfy Lemma~\ref{lem:rings}
to geometrical arguments.

In Section~\ref{sec:newtonalgo} we start by detailing
Algorithm~\ref{alg:rings} and prove the inequalities (i) and (iii) of
Lemma~\ref{lem:rings}.  Then in
Section~\ref{sec:newtoncomplexity} we prove the inequality (ii) that is
fundamental to bound the complexity of all the subsequent algorithms.

%
%
%

\subsection{Algorithm}
\label{sec:newtonalgo}
Our algorithm takes as input the Newton polygon $H$ of the polynomial $f$.
The Newton polygon is the lower convex hull
of the points $(j, -\log_2 |f_j|)$ for $0 \leq j \leq d$ and in the general
case, it can be computed in $O(d \log d)$ arithmetic operations
\cite{graham1983finding},
or $O(d)$ arithmetic operations when the points
are sorted, which is the case here.

\begin{figure}
  \includegraphics[width=0.9\linewidth]{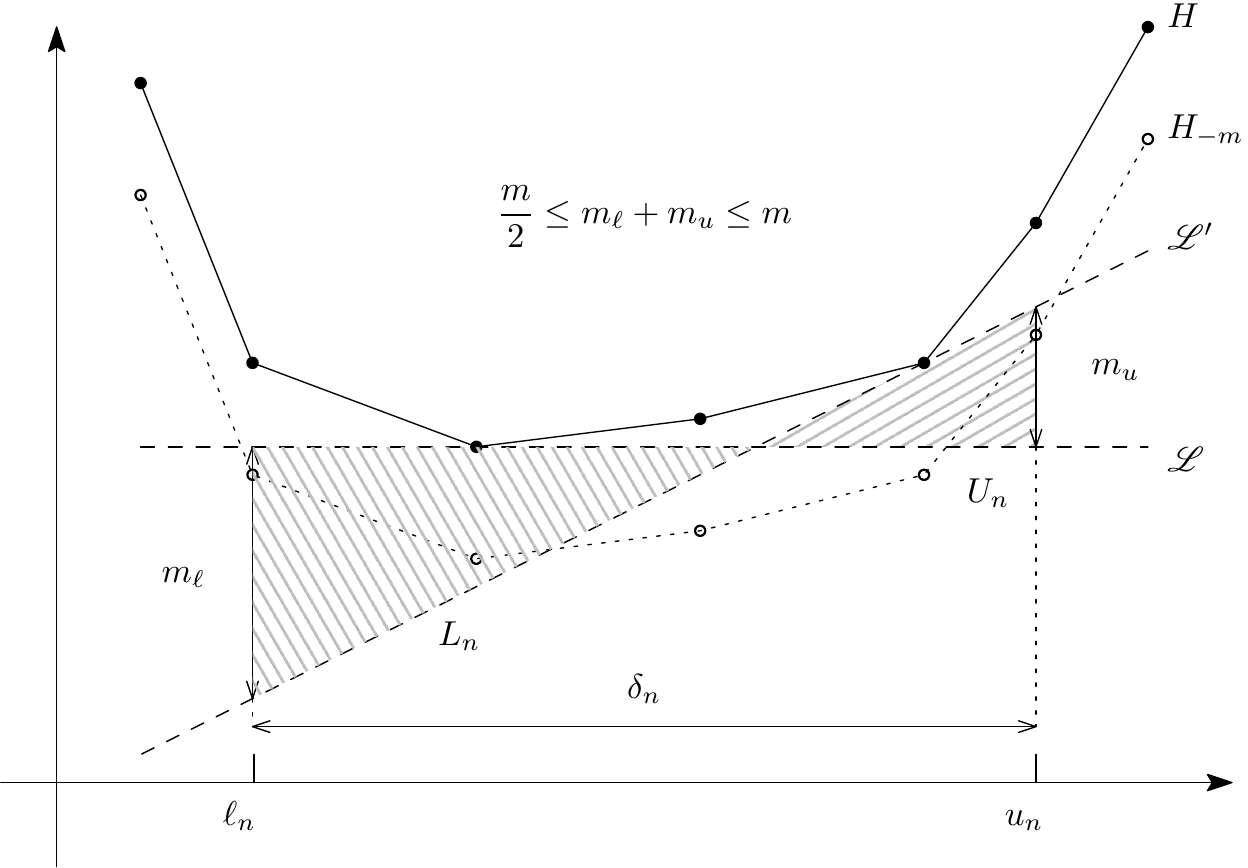}
  \caption{One step to sweep the line in Algorithm~\ref{alg:rings}.}
  \label{fig:onestep}
  \vspace*{-0.5cm}
\end{figure}

The main idea of Algorithm~\ref{alg:rings} is to swipe a line tangent to
$H$, as illustrated in Figure~\ref{fig:onestep}. Given a line $\mathscr L$ of slope $s$ tangent to $H$, the next
line $\mathscr L'$ of slope $s'$ tangent to $H$, with $s'>s$, is computed
informally as follow:
\begin{itemize}
  \item[1.] Let $H_{-m}$ be the polygon $H$ shifted vertically by $-m$.
  \item[2.] Let $\ell$ be the abscissa of the leftmost point of $H_{-m}$
    below $\mathscr L$.
  \item[3.] For the new line $\mathscr L'$, we denote by
    \begin{itemize}
      \item \hspace*{-0.1cm}$u$ the abscissa of the rightmost point of $H_{-m}$
        below $\mathscr L'$,
      \item \hspace*{-0.1cm}$m_\ell$   the vertical distance between $\mathscr L$ and
        $\mathscr L'$ at
        $\ell$,
      \item \hspace*{-0.1cm}$m_u$   the vertical distance between $\mathscr L$ and
        $\mathscr L'$ at $u$.
    \end{itemize}
  \item[4.] Starting from the slope of $\mathscr L$, increase the slope
    of $\mathscr L'$
    until
          $$m/2 \leq m_\ell + m_u \leq m.$$
\end{itemize}

In Algorithm~\ref{alg:rings}, line~\ref{line:l} 
addresses step 1 above. In
line~\ref{line:u} we compute the maximal abscissa
$u$ such that $m_\ell+m_u \leq m$. 
Then we choose $s'=s+m/(u-l+1)$ for the new slope of $\mathscr L'$;
this implies that $m_l+m_u = m(u-l)/(u-l+1) \leq m$, and as soon as
$u>l$, we have $m/2 \leq m_l+m_u \leq m$. Thus, if $\delta_n\geq 1$, we
have $(r_{n+1}/r_n)^{\delta_n} = 2^{m(u_n-\ell_n)/(u_n-\ell_n+1)}$ is
between $2^{m/2}$ and $2^m$,  which proves the inequality (i)
of Lemma~\ref{lem:rings}.

For the inequality (iii), let $z$ be such that $r_n\leq|z|\leq r_{n+1}$.
This means that $\log |z|$ is between $s_n$ and $s_{n+1}$. Let $\mathscr
L''$ be the line of slope $\log |z|$ tangent to $H$. It is below
$\mathscr L$ for the abscissae less than $\ell_n$, and below $\mathscr L'$
for the abscissae greater than $u_n$. In particular, let $P=(k,y_k)$ be
a vertex of $H$ contained in $\mathscr L''$, and let $Q = (j,y_j)$ be a point of
$H_{-m}$ such that $j<\ell_n$ or $j>u_n$ and $|f_j| \neq 0$. By construction, we have
$y_k = -\log_2 |f_k|$ and $y_j\leq-\log_2 |f_j|-m$. Moreover we know that
the line $\mathscr L''$ is below $P_j$. Since $\mathscr L''$ is a line
of slope $\log |z|$ passing through $P_k$, this implies that $y_k +
(j-k)\log_2 |z| \leq y_j$. Equivalently, this means that $-\log_2 |f_k|
+ (j-k)\log_2 |z| \leq -\log_2 |f_j|-m$. Taking the power of two, this
implies that $|z|^{j-k}/|f_k| \leq 2^{-m}/|f_j|$, which can be rewritten
as $|f_j||z|^j \leq 2^{-m}|f_k||z|^k \leq 2^{-m}\hat f(|z|)$. This
implies that for all $z\in R_n$ and all $j< \ell_n$ and all $j> u_n$, we
have $|f_j||z|^j \leq 2^{-m}\hat f(|z|)$, which in turn implies the
inequality (iii).





\begin{algorithm}
  \DontPrintSemicolon
  \KwInput{\hspace*{-3ex}%
    \begin{minipage}[t]{7.5cm}%
      \vspace{-0.5em}%
      \begin{itemize}
        \item[$m$:] a positive integer
        \item[$H$:] list of points $(j, h_j)$ on the Newton polygon
          of $f$\\
              for $0 \leq j \leq d$ 
      \end{itemize}         
    \end{minipage}
  }
  \KwOutput{Satisfying inequalities (i), (ii), (iii) of Lemma~\ref{lem:rings}
    \begin{minipage}[t]{7cm}%
      \begin{itemize}
        \item[$(r_n)_{0\leq n < N}$:] list of circle radii surrounding the rings
        \item[$(\ell_n)_{0\leq n < N}$:] lower indices of the dominant monomials
        \item[$(u_n)_{0\leq n < N}$:] upper indices of the dominant monomials
      \end{itemize}         
    \end{minipage}
  }
  \vspace{1em}
  \For{$j$ from $1$ to $d-1$}{
    $\mathcal{L} \gets$ lines tangent to $H$ passing through $(j, h_j-m)$\;
    $S^{0}_j \gets $lower slope of the $2$ lines in $\mathcal{L}$\;
    $S^{1}_j \gets $upper slope of the $2$ lines in $\mathcal{L}$\;
  }
  start $\gets$ slope of line through $(0,h_0)$ and $(1,h_1-m)$\;
  end $\gets$ slope of line through $(d-1,h_{d-1})$ and $(d,h_d-m)$\;
  $n, s_n \gets 1$, start\;
  $r_0$, $r_1$, $\ell_0$, $u_0 \gets 0, 2^{\text{start}}, 0, 0$ \;
  \While{$s_n < $ end}{
    $\ell_n \gets $ minimal index $\ell$ such that $S^{1}_\ell >
    s_n$\label{line:l}\;
    $u_n \gets $ maximal index $u$ such that $(u-\ell_n)(S^{0}_{u} - s) < m$\label{line:u}\;
    $s_{n+1} \gets s_{n} + \frac{m}{u_n-\ell_n+1}\label{line:s}$
    \lIf*{$u_n>\ell_n$}{\lElse{$S^{0}_{u_n+1}$}}
    $r_{n+1} \gets 2^{s_{n+1}}$\;
    $n \gets n+1$ \;
  }
  $r_{n+1}, \ell_n, u_n \gets +\infty, d, d$\;
  $N \gets n+1$\;
  \Return $(r_n)_{0\leq n \leq N}$, $(\ell_n)_{0\leq n < N}$, $(u_n)_{0\leq n < N}$\;

  \caption{Rings and dominant monomials}
  \label{alg:rings}
\end{algorithm}


\subsection{Complexity}
 \label{sec:newtoncomplexity}

%
%
%
%

We can now prove the inequality (ii) that will allow us notably to bound the
number of bit operations of Algorithm~\ref{alg:rings} in
Section~\ref{sec:ringcomplexity}.

\subsubsection{Bound on cumulated number of monomials}

\begin{figure}
  \includegraphics[width=0.9\linewidth]{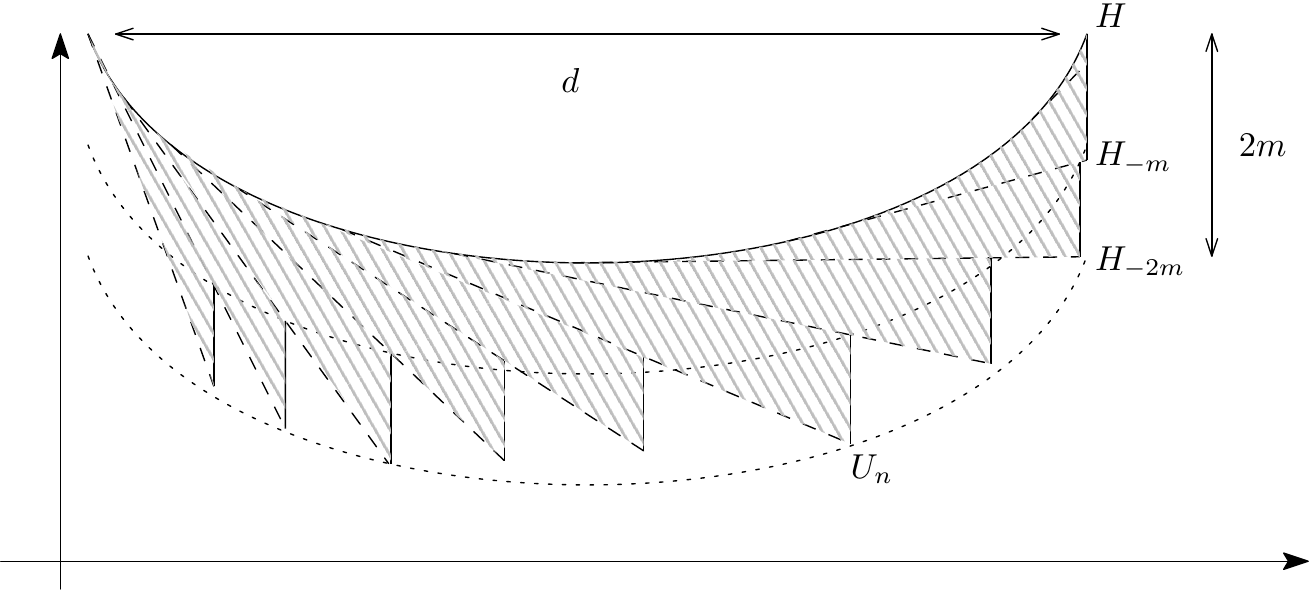}
  \caption{$N$ steps to sweep the line in Algorithm~\ref{alg:rings}.}
  \label{fig:Nsteps}
  \vspace*{-0.5cm}
\end{figure}

By the way the $\delta_n$ are computed, it is not obvious that their sum
is linear in $d$. The proof comes from a geometrical argument. For each
step, we can define an area that has a size in $\Omega(\delta_n m)$, and
we will show that the sum of the area is in $O(dm)$, as illustrated on
Figure~\ref{fig:Nsteps}. More precisely, for $0 \leq n < N$ such that
$\delta_n \geq 1$, let $V_n$ be the vertical band between the abscissae
$\ell_n$ and $u_n$. Letting $\mathscr L$ and $\mathscr L'$ be the lines of slopes $s_n$ and
$s_{n+1}$ respectively, we define two regions:
\begin{itemize}
  \item $U_n$ the set of points of $V_n$ above $\mathscr L$ and below $\mathscr L'$
  \item $L_n$ the set of points of $V_n$ below $\mathscr L$ and above $\mathscr L'$
\end{itemize}

First, since $V_n$ has width $\delta_n$, and the difference of the slopes between $\mathscr L$
and $\mathscr L'$ is $m/(\delta_n+1) \geq m/(2\delta_n)$, the sum of the areas of $L_n$
and $U_n$ is at least $\delta_n m / 8$. 

Then, let $U$ be the union of the $U_n$ for $0\leq n<N$ and
$\delta_n\geq1$. By construction, for different indices $n$, the $U_n$
are pairwise distinct, such that the area of $U$ is greater than
the sum of the area of $U_n$. Similarly, letting $L$ be the union of
the $L_n$ we have that the area of $L$ is greater than sum of the area
of $L_n$.

Finally, let $H_{-2m}$ be the polygon obtained by shifting the Newton
polygon $H$ vertically by $-2m$. By construction, the index $u_n$ has
been chosen such that the vertical distance between $\mathscr L$ and the
point of abscissa $u_n$ of $H_{-m}$ is at most $m$. Thus the points
where $\mathscr L$ exits $V_n$ on the right is above $H_{-2m}$. Moreover, at
abscissa $\ell_n$, the line $\mathscr L$ is above $H_{-m}$, so that in
the vertical band $V_n$, the line $\mathscr L$ lies entirely above
$H_{-2m}$. Similarly, we can prove that $\mathscr L'$ is above $H_{-2m}$ since it is above
$\mathscr L$ at abscissa $u_n$, and at abscissa $\ell_n$, its vertical
distance to $\mathscr L$ is $m_{\ell_n} \leq m$. Thus $U_n$ and $L_n$
are contained in the band between $H$ and $H_{-2m}$. This implies that
$U$ (resp. $L$) has an area less than $2dm$.

Gathering all the geometrical constraints, and letting
$|U|$, $|L|$, $|U_n|$, $|L_n|$ be the areas of the corresponding regions, we have:
$$\sum_{\substack{0\leq n <N~|~\delta_n\geq 1}} \frac{\delta_n m}8 \leq
\sum_{\substack{0\leq n <N~|~\delta_n\geq 1}} |U_n|+|L_n| \leq |U| + |L|
\leq 4dm$$
such that $\sum_{0\leq n <N} \delta_n \leq 32d$. And for $\delta_n\geq1$
we have $\delta_n+1\leq 2\delta_n$, which implies $\sum_{0\leq n <N,
\delta_n\geq1}
(\delta_n+1) \leq 64d$.

To conclude the proof of the inequality (ii) of Lemma~\ref{lem:rings},
we need to take also into account the case where $\delta_n = 0$. This
case happens when $\ell_n=u_n$. The values $\ell_n$ such that
$\delta_n=0$ are all distinct, such that $\sum_{0\leq n<N, \delta_n=0}
(\delta_n+1) \leq d+1$.

%
%

\subsubsection{Bound on the complexity of Algorithm~\ref{alg:rings}}
 \label{sec:ringcomplexity}

We prove here that the number of bit operations of
Algorithm~\ref{alg:rings} is in $\widetilde O(d(m+\log\tau))$.
Let $N$ be the number of iterations of the {\bf while} loop.

First, the lists $S^0$ and $S^1$ can be computed with $O(d)$
arithmetic operations. Second, since the indices 
$\ell$ and $u$ are non-decreasing, computing the sequences
$(\ell_n)_{n\leq N}$ and $(u_n)_{n\leq N}$ amount to 
$O(N+d)$ comparisons.
Third, the terms of the sequence $(r_n)_{n\leq N}$
can be computed using
\sloppy Arithmetic-Geometric means \cite{BRENT1976151}. This method is quasi-linear in
required number of correct significant bits. The $h_i$ have
a magnitude bounded by $\tau$. Thus, the variable $s$ has a
magnitude in $O(\tau)$ and it is sufficient to perform the arithmetic
operations and the exponentiation with $O(m + \log d + \log \tau)$ correct significant digits
in order to get the required precision for a term $r_n$.

Finally, using inequality (ii) of Lemma~\ref{lem:rings},
$N$ is in $O(d)$ which
concludes the proof.

\section{Piecewise approximation over angular sectors}
\label{section_approximation}
In the previous section, we computed a first piecewise approximation of
$f$ by polynomials $z^{\ell_n} f_{\ell_n,u_n}(z)$ over a partition of the complex
plane in concentric rings $R_n$.
Even though the total number of monomials of those polynomials is linear in $d$, it is possible that a
single approximating polynomial $f_{\ell_n, u_n}$ has a degree $d$.
Thus we need to further truncate the polynomials so that
subsequent evaluations or root finding methods are faster. This is done
by dividing uniformly each ring $R_n$ in angular sectors $A_{n,k}$, and by
computing a Taylor approximation of $f_{\ell_n,u_n}$ in each sector. More
precisely, let
$\gamma_{n,k}$ be the center and $|\rho_{n,k}|$ be the radius of a disk that
contains $A_{n,k}$. Then we
approximate $f_{\ell_n,u_n}$ by computing the first $4m+1$ coefficients
of the polynomial $f_{\ell_n,u_n}$ composed with the
polynomial $\gamma_{n,k} + \rho_{n,k}t$.
We denote by $g_{n,k}$ the resulting polynomial. The main result of this
section is summarized in the following Lemma.

\begin{lemma}
  \label{lem:approximations}
  Given a polynomial $f_{\ell_n,u_n}$ satisfying the conditions of
  Lemma~\ref{lem:rings}, it is possible to compute in $\widetilde
  O(\delta_n (m + \log \tau))$ bit operations the parameters of $K$ affine changes of
  variable $Z = \gamma_{n,k} + \rho_{n,k}T$ and $K$ polynomials $g_{n,k}$ of degree $\min(\delta_n, 4m)$
  such that:
  \begin{gather}
  \tag{i} R_n\text{ is included in the union of the disks }D(\gamma_{n,k},|\rho_{n,k}|), \\
  \tag{ii} \|f_{\ell_n,u_n}(\gamma_{n,k} + \rho_{n,k}T) - g_{n,k}(T)\|_1 \leq (\delta_n+1) 2^{-m} 
  \hat f(r_n) r_n^{-\ell_n}.
  \end{gather}
\end{lemma}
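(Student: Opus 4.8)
The plan is to split the claim into its three parts and handle the geometric setup first, then the two estimates.

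\textbf{Setting up the angular sectors (part (i)).} First I would fix the ring $R_n = R(r_n, r_{n+1})$. By inequality (i) of Lemma~\ref{lem:rings}, the ratio $r_{n+1}/r_n = 2^{m(u_n-\ell_n)/(\delta_n(u_n-\ell_n+1))}$ (when $\delta_n \geq 1$) is controlled: it lies between $2^{m/(2\delta_n)}$ and $2^{m/\delta_n}$. The idea is to cover the annulus $R_n$ by $K = O(\delta_n)$ disks $D(\gamma_{n,k}, |\rho_{n,k}|)$, where the centers $\gamma_{n,k}$ are placed on a circle of intermediate radius (say $\sqrt{r_n r_{n+1}}$ or similar), at $K$ equally spaced angles, and the radius $|\rho_{n,k}|$ is chosen just large enough that consecutive disks overlap and cover the full radial extent. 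A disk of radius proportional to the center modulus times $1/\delta_n$ covers an angular sector of width $\Theta(1/\delta_n)$ and, since $r_{n+1}/r_n = 1 + O(m/\delta_n)$ is not too large when $\delta_n \gtrsim m$ — but can be as large as $2^m$ when $\delta_n = 1$ — I need to be careful: when $\delta_n$ is small (close to $1$) the ring is geometrically thick, so a single disk per angular sector does not suffice and one must also subdivide radially, still keeping $K = O(\delta_n)$ by a dyadic argument on the radial direction. This bookkeeping — choosing $\gamma_{n,k}$, $\rho_{n,k}$ and verifying $R_n \subseteq \bigcup_k D(\gamma_{n,k},|\rho_{n,k}|)$ with $K \in O(\delta_n)$ — is part (i).

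\textbf{The truncation error (part (ii)).} With the change of variable $Z = \gamma_{n,k} + \rho_{n,k} T$, the composed polynomial $f_{\ell_n,u_n}(\gamma_{n,k} + \rho_{n,k} T)$ has degree $\delta_n$ in $T$, and $g_{n,k}$ is its truncation to degree $\min(\delta_n, 4m)$. So the error is $\sum_{i > 4m} c_i T^i$ where $c_i$ is the $i$-th Taylor coefficient, i.e. $c_i = \frac{1}{i!}\,\rho_{n,k}^i\,(f_{\ell_n,u_n})^{(i)}(\gamma_{n,k})$, and $\|\cdot\|_1$ of the error is $\sum_{i>4m} |c_i|$. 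The plan is to bound $|c_i|$ by a geometric decay. By Cauchy's estimate on a circle $|T| = \sigma$ slightly larger than $1$ (chosen so that $\gamma_{n,k} + \rho_{n,k} T$ still stays in $R_n$ or a modest enlargement of it), $|c_i| \leq \sigma^{-i} \max_{|T| = \sigma} |f_{\ell_n,u_n}(\gamma_{n,k} + \rho_{n,k} T)|$. Since on that circle $|\gamma_{n,k} + \rho_{n,k} T|$ lies in an interval of the form $[c_1 r_n, c_2 r_{n+1}]$ with constants, and $\widetilde f_{\ell_n,u_n}(|z|) \leq (\delta_n+1)\hat f_{\ell_n,u_n}(|z|)$, while $\hat f_{\ell_n,u_n}(|z|) \leq 2^{O(m)} \hat f(r_n)$ for $|z| \in R_n$ by the definition of the Newton polygon and inequality (i) of Lemma~\ref{lem:rings} — because the dominant monomial of $f$ on $R_n$ has index in $[\ell_n, u_n]$ and its value varies by at most $2^m$ across $R_n$ — one gets $\max |f_{\ell_n,u_n}(\gamma_{n,k}+\rho_{n,k}T)| \leq (\delta_n+1) 2^{O(m)} \hat f(r_n) r_n^{-\ell_n}$ (the $r_n^{-\ell_n}$ factor because $f_{\ell_n,u_n}(z) = z^{-\ell_n}\cdot(\text{the relevant part of }f)$, so I should track that normalization). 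Then $\sum_{i > 4m} |c_i| \leq \sum_{i>4m} \sigma^{-i}\cdot(\text{that bound})$; choosing $\sigma = 1 + \Theta(1)$ large enough that $\sigma^{4m}$ beats the $2^{O(m)}$ factor and the extra $\delta_n+1$, the tail sums to at most $(\delta_n+1) 2^{-m} \hat f(r_n) r_n^{-\ell_n}$. The delicate point is making the exponential rates match: the enlargement of the covering disks (controlling $\sigma > 1$) must be chosen together with the constant in $K = O(\delta_n)$ so that the gain $\sigma^{-4m}$ dominates; this is the main obstacle and requires pinning down the constants carefully (which is why the statement uses $4m+1$ coefficients rather than, say, $m+1$).

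\textbf{The complexity count.} Finally, for the $\widetilde O(\delta_n(m+\log\tau))$ bit-operation bound: computing the $K = O(\delta_n)$ centers and radii is $O(\delta_n)$ arithmetic operations on numbers of bit-size $O(m + \log\tau)$ (the radii are powers of two with exponents of magnitude $O(\tau)$, handled as in the proof of Lemma~\ref{lem:rings}). For each sector, computing the first $4m+1$ coefficients of $f_{\ell_n,u_n}(\gamma_{n,k}+\rho_{n,k}T)$ is a Taylor shift of a polynomial of degree $\delta_n$ followed by a scaling, which can be done in $\widetilde O(\delta_n)$ arithmetic operations via the standard FFT-based Taylor-shift algorithm — but one must argue that only $O(m + \log\tau + \log\delta_n)$ bits of precision per coefficient are needed, since the output coefficients have controlled magnitude (bounded as in part (ii)) and the intermediate quantities stay bounded by the same geometric estimates. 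Summing over the $K = O(\delta_n)$ sectors gives $\widetilde O(\delta_n)$ arithmetic operations total? — no: it is $K$ Taylor shifts each costing $\widetilde O(\delta_n)$, i.e. $\widetilde O(\delta_n^2)$, which is too much. So instead the plan is to compute a single Taylor-type data structure: evaluate $f_{\ell_n,u_n}$ and enough of its derivatives on the common circle of centers via one FFT of size $O(\delta_n)$, so that all $K$ local expansions are extracted in $\widetilde O(\delta_n)$ total. With each arithmetic operation on $O(m+\log\tau)$-bit numbers costing $\widetilde O(m + \log\tau)$ bit operations, the total is $\widetilde O(\delta_n(m+\log\tau))$, as claimed.
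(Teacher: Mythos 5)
There is a genuine gap, and it lies in the single quantitative choice that the whole lemma hinges on: the number and size of the covering disks. You take $K=\Theta(\delta_n)$ disks of radius $\Theta(\gamma/\delta_n)$, each covering an angular width $\Theta(1/\delta_n)$. This fails twice. First, coverage: by inequality (i) of Lemma~\ref{lem:rings} the radial thickness of $R_n$ is $r_{n+1}-r_n\approx r_n\bigl(2^{m/\delta_n}-1\bigr)=\Theta(\gamma\, m/\delta_n)$ in the main regime $\delta_n\geq m$, i.e.\ $m$ times larger than your disk radius, so your disks do not reach across the ring precisely in the regime you declare unproblematic (your radial-subdivision caveat is aimed at small $\delta_n$, which is not where the issue is). Second, complexity: with $K=\Theta(\delta_n)$ sectors, each output polynomial having degree $\min(\delta_n,4m)$ and coefficients of $\Theta(m)$ bits, merely writing the output takes $\Theta(\delta_n m^2)$ bits when $\delta_n\geq m$, so no algorithm can meet the claimed $\widetilde O(\delta_n(m+\log\tau))$ bound; the closing amortization sketch (one FFT of size $O(\delta_n)$ "extracting all local expansions") cannot repair an output-size obstruction, and as described it does not actually produce the $4m+1$ Taylor coefficients at each of the $K$ centers.

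The paper's construction is calibrated exactly to avoid this: it takes $\gamma=(r_n+r_{n+1})/2$, $\rho=(3/4)(r_{n+1}-r_n)$, and centers $\gamma e^{2\pi i k/K}$; Lemma~\ref{lem:disk} shows each disk $D(\gamma,\rho)$ covers an angular sector of angle at least $\rho/\gamma$ with $\gamma/\rho\leq 1+3\delta_n/m$, so $K=\lceil 2\pi\gamma/\rho\rceil=O(\delta_n/m+1)$ sectors of angular width $\Theta(m/\delta_n)$ suffice — a factor $m$ fewer than yours. The rotational symmetry of the centers is then what makes the computation quasi-linear: Algorithm~\ref{alg:approximations} computes all $K$ truncated compositions together by a baby-step giant-step scheme (powers of $\gamma+\rho T$ modulo $T^{4m+1}$, regrouping the coefficients of $f_{\ell_n,u_n}$ by index modulo $K$, one fast composition per residue class, then a DFT of size $K$ for each of the $4m+1$ output coefficients), for $\widetilde O(\delta_n(m+\log\tau))$ bit operations in total. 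Your Cauchy-estimate argument for (ii) is workable in spirit — it needs $\rho/\gamma=O(m/\delta_n)$, which the correct radius provides and which the paper exploits through the binomial bound $\binom{\delta_n}{k}(\rho/\gamma)^k\leq 2^{-k}$ for $k\geq 4m$ — but note that within the stated bit budget $g_{n,k}$ cannot be the exact truncation, so the proof must also absorb the $2^{-4m}$ coefficient errors of the fixed-precision computation, which the paper does via the normalization bound $M\leq(\delta_n+1)2^{3m-1}\hat f(r_n)r_n^{-\ell_n}$ of Lemma~\ref{lem:boundM}.
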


We use Algorithm~\ref{alg:approximations} to compute those
approximations. We describe it and provide its complexity analysis in
Section~\ref{sec:tayloralgo}, and we prove the inequalities (i) and (ii)
in Section~\ref{sec:taylorineq}.


%
%
%
%
%
%

\subsection{Algorithm and complexity analysis}
\label{sec:tayloralgo}
In Algorithm~\ref{alg:approximations} we focus on one ring $R_n$
defined by two circles of radii $r_n$ and $r_{n+1}$. The algorithm is a
variant of the algorithm used in a previous work where the rings were
given by explicit formula \cite[\S 3]{moroz2022new}. The main differences are
that we need to handle arbitrary rings, and we need to guarantee a
different error bound for the approximation polynomials $g_{n,k}$.

\begin{algorithm}
  \DontPrintSemicolon
  \KwInput{%
    \begin{minipage}[t]{7cm}%
      \begin{itemize}
        \item[$m$:] a positive integer
        \item[$f$:] input polynomial
        \item[$r_n, r_{n+1}$:] two consecutive radii surrounding a ring
        \item[$\ell_n$:] lower index of the dominant monomials
        \item[$u_n$:] upper index of the dominant monomials
      \end{itemize}         
    \end{minipage}
  }
  \KwOutput{Satisfying inequalities (i),(ii) of Lemma~\ref{lem:approximations}
    \begin{minipage}[t]{0.8\linewidth}%
      \begin{itemize}
        \item[$(\gamma_{n,k},\rho_{n,k})_{0\leq k < K_n}$:] change of variable parameters: $z = \gamma_{n,k} + \rho_n t$
        \item[$(g_{n,k})_{0\leq k < K_n}$:] list of approximation
          polynomials
      \end{itemize}         
    \end{minipage}
  }
  \vspace{1em}
  \tcc{Computes truncated Taylor shifts using a\\baby-step giant step approach}
  \vspace{1em}
  \tcc{A. Parameters for the change of variable\\
  \phantom{A.} $z=(\gamma+\rho t)e^{i2\pi\frac k K}$}
  $\gamma, \rho, \gets 
  \frac{r_{n+1}+r_n}{2}, \frac 3 2 \frac{r_{n+1}-r_n} 2$ \;
  $\delta \gets u_n - \ell_n$\;
  $K \gets \lceil 2 \pi \frac{\gamma}{\rho} \rceil$\tcc*{$K \leq 3 \delta / m + 2$}
  $m' \gets \lfloor \frac{\delta}{K} \rfloor$\;
  \vspace{1em}

  \tcc{B. Baby steps.\\ 
       \phantom{B.} Compute $(\gamma+\rho T)^k \mod T^{4m+1}$, normalized}
  $q_0(T) \gets \frac{\gamma + \rho T}{\gamma+\rho}$\;
  \For{$k$ from $0$ to $K-1$}{
    $q_{k+1}(T) \gets q_{k}(T) \cdot \frac{\gamma + \rho T}{\gamma +
    \rho} \mod T^{4m+1}$\;
  }
  \vspace{1em}
  \tcc{C. Giant steps}
  \tcc{C.1 Gather monomials with same indices modulo K}
  $M \gets m' \frac{\hat f(\gamma)}{\gamma^{\ell_n}}(1+\frac \rho \gamma)^{\delta}$\;
  \For{$k$ from $0$ to $K-1$}{
    $p_k(Z) \gets \frac 1 M \sum_{j=0}^{m'} f_{\ell_n + k + j
    K}(\gamma+\rho)^{k+jK} Z^j$\;
  }
  \tcc{C.2 Fast composition with absolute error $2^{-4m}$}
  \For{$k$ from $0$ to $K-1$}{
    $r_k(T) \gets \sum_{j=0}^{4m} r_{k,j} T^j = p_k\left(q_{K}(T)\right)
    \cdot q_k(T) \mod T^{4m+1}$\;
  }
  \vspace{1em}
  \tcc{D. Spread approximations to all angular sectors}
  \For{$j$ from $0$ to $\min(\delta, 4m)$}{
    $r_j^T(W) \gets \sum_{k=0}^{K-1} r_{k,j} W^k$\; 
    $g_{0,j}, \ldots, g_{K-1,j} \gets r^T(e^{i2\pi\frac k {K}})$ for $0
    \leq k < K$ with FFT\;
  }
  \Return $(\gamma  e^{i2\pi\frac k {K}}, \rho e^{i2\pi\frac k {K}})$ for $0\leq k < K$\\ 
  \phantom{\Return}$\sum_{j=0}^{4m} M g_{k,j} T^j$ for $0\leq k < K$\;
  \caption{Piecewise polynomial approximation}
  \label{alg:approximations}
\end{algorithm}

We start by computing the parameters of the disks covering the ring
$R_n$, where $\gamma$ is the middle point between $r_n$ and $r_{n+1}$, whereas
$\rho$ is $3/2$ the radius of the interval $[r_n,r_{n+1}]$. With those
parameters, we can check that the disk $D(\gamma,\rho)$ contains an
angular sector of $R_n$ of angle $2 \rho / \gamma$
(Lemma~\ref{lem:disk}). The number of disks necessary to cover $R_n$ is thus bounded by $K \leq 3
(\delta / m) +2$, which is in $O(\max(\delta,m)/m)$. Then we want to
compute the first $4m+1$ coefficients of the polynomials in $t$ obtained
after the change of variable $z=(\gamma + \rho t)e^{i2\pi k/K}$ for
all $0\leq k < K$ in the polynomial $f_{\ell_n,u_n}$;
next we show how to do it
with bit complexity
$\widetilde O(\delta m)$ using a baby-step giant-step or rectangular
splitting approach \cite{brent2010modern}.


For that, remark that when computing the fast Fourier
transforms, the $e^{2\pi j/K}$ are equal for indices that are equal
modulo $K$. Thus for a given $k$ between $0\leq k < K$, we can gather
the terms of index $k$ modulo $K$ in the polynomial
$r_k = \sum_{j=0}^{\delta/K} h_{k+jK}$ before computing the fast Fourier
transform. This will allow us to reorder our computations 
to compute the $g_{n,k}$ in
$\widetilde O(\delta m)$ bit operations.

More precisely, in step \texttt{B.}
(the baby-step) 
we compute the powers of the
polynomials $\gamma + \rho T$ modulo $T^{4m+1}$ up to the exponent $K$
which is in $O(\delta/m+1)$. At each step of the loop, normalizing
the linear factor can be done in $\widetilde O(m+\log \tau)$ bit
operations using floating-point arithmetic since $\gamma$ and $\rho$
have a magnitude in $O(2^\tau)$.  Then each polynomial multiplication can
be done in
$\widetilde O(\min(\delta,m) m)$ bit operations. Thus all the polynomials $q_k$ can
be computed in $\widetilde O(\min(\delta,m) Km + K(m+\log \tau)))$, or $\widetilde
O(\delta m + K \log \tau)$ since $Km$ is in $O(\max(\delta,m))$.

Then in step \texttt{C.} 
(the giant-step) 
we compute the coefficients of the polynomials
$r_k$ of $q_k \cdot (p_k \circ q_K) \mod T^{4m+1}$ by using a
combination of fast Taylor shift (\cite[Theorem~8.4]{schonhage1982fundamental}) and fast
composition (\cite[Theorem~2.2]{Rtcs86}).
This can be done with an absolute error on the coefficients less than $2^{-4m}$
in $\widetilde O(\min(\deg(p_k)\deg(q_k),m)m)$ bit
operations.
Since the degree of $q_k$ is $K$ and the degree of $p_k$ is $m' \leq
\delta /K$, the degree $\deg(p_k)\deg(q_k)$ of $p_k\circ q_k$ is in
$O(\delta)$ and all the $r_k$ can be computed in $\widetilde
O(\min(\delta,m)Km)$ bit operations. Finally, $Km$ is in
$O(\max(\delta,m)$, such that step \texttt{C.2} can be done in
$\widetilde O(\delta m)$ operations. Moreover, as in step \texttt{B.}, normalizing the
coefficients of $p_k$ in step \texttt{C.1} costs $\widetilde
O(\delta(m+\log \tau))$.

Finally in step \texttt{D.} we compute the polynomials $g_{n,k}$ using $\min(\delta, 4m)$ fast Fourier
transforms with an absolute error less than $2^{-4m}$
(\cite{schonhage1982fundamental}). This can be done
in $\widetilde O(\min(\delta,m) Km)$ bit operations, that is in
$\widetilde O(\delta m)$.

Thus, the total number of bit operations of
Algorithm~\ref{alg:approximations} is in $\widetilde O(\delta (m+\log \tau))$.

\subsection{Correctness}
\label{sec:taylorineq}
We prove here 
inequalities (i) and (ii) of
Lemma~\ref{lem:approximations}.

\vspace*{-0.15cm}
\subsubsection{Ring cover}
For the inequality (i) of Lemma~\ref{lem:approximations}, the ring $R_n$ is contained in the union of the
$K_n$ disks $D(\gamma_{n,k},|\rho_{n,k}|)$ if the following Lemma holds. 


\begin{lemma}
  \label{lem:disk}
The disk $D(\gamma,\rho)$ contains an angular sector of $R_n$ of angle
$\frac \rho \gamma$. Furthermore, $\frac \gamma \rho \leq 1 + 3 \frac \delta m$.
\end{lemma}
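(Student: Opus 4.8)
The plan is to prove the two assertions separately; the first is a direct planar computation and the second reduces to a one-variable inequality.

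For the angular-sector claim, write $\gamma=\frac12(r_n+r_{n+1})$, $\rho=\frac32\cdot\frac12(r_{n+1}-r_n)$, set $w=\frac12(r_{n+1}-r_n)=\frac23\rho$ and $\theta=\frac{\rho}{2\gamma}$, and note $w\le\gamma$ because the radii are nonnegative. I would show that any point $z=\sigma e^{i\phi}$ of $R_n$ with $|\phi|\le\theta$ satisfies $|z-\gamma|\le\rho$; this is exactly the statement that $D(\gamma,\rho)$ contains the angular sector of $R_n$ of angle $2\theta=\rho/\gamma$ about the positive real axis. Writing $\sigma=\gamma+t$ with $|t|\le w$, one has $|z-\gamma|^2=2\gamma^2(1-\cos\phi)+2\gamma t(1-\cos\phi)+t^2$; bounding $1-\cos\phi\le\phi^2/2\le\theta^2/2$ and using $|t|\le w\le\gamma$ makes the three terms at most $\gamma^2\theta^2=\frac{\rho^2}{4}$, $\gamma w\theta^2\le\frac{\rho^2}{4}$, and $w^2=\frac49\rho^2$ respectively, so $|z-\gamma|^2\le\frac{17}{18}\rho^2<\rho^2$, as wanted.

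For the estimate $\gamma/\rho\le1+3\delta/m$ I would assume $\delta=u_n-\ell_n\ge1$ (the case $\delta=0$ being a monomial term handled directly, and for which the stopping rule of Algorithm~\ref{alg:rings} in fact forces $r_{n+1}/r_n\ge2^m$). By the radius formula derived in Section~\ref{sec:newtonalgo}, $R:=r_{n+1}/r_n=2^{m/(\delta+1)}$, so with $\varepsilon:=R-1>0$ we get $\gamma/\rho=\frac{2(R+1)}{3(R-1)}=\frac{4}{3\varepsilon}+\frac23$, and the claim becomes $\varepsilon\bigl(\frac13+\frac{3\delta}{m}\bigr)\ge\frac43$. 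Using $2^x\ge1+x\ln2$ gives $\varepsilon\ge\frac{m\ln2}{\delta+1}$, whence the left-hand side is at least $\frac{3\delta}{m}\varepsilon\ge\frac{3\delta\ln2}{\delta+1}\ge2\ln2>\frac43$ whenever $\delta\ge2$. For $\delta=1$ the inequality is $(2^{m/2}-1)(\frac13+\frac3m)\ge\frac43$, i.e. $2^{m/2}\ge\frac{5m+9}{m+9}$; since the right side never reaches $5$ this is immediate for $m\ge5$, and $m\in\{1,2,3,4\}$ are checked directly, the extreme case being $m=1$ where $\sqrt2\ge\frac75$.

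The only subtle point is this last chain for small $\delta$ and $m$: the constant $3$ in $1+3\delta/m$ is essentially tight at $\delta=m=1$, so the lower bound on $\varepsilon=2^{m/(\delta+1)}-1$ must not waste constants — which is why I would split off $\delta=1$ rather than bound $\delta/(\delta+1)$ crudely by $1/2$ throughout. Everything else is routine.
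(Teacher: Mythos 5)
Your proof is correct, and on both halves it takes a route that differs from (and is more detailed than) the paper's. For the covering claim the paper computes the half-angle $\beta$ subtended at the origin by $D(\gamma,\rho)$, via the triangle with side lengths $\gamma$, $r_{n+1}$, $\rho$ and the inequality $\beta^2\geq 2(1-\cos\beta)$, obtaining $2\beta>(2/3)\sqrt{5/2}\,\rho/\gamma>\rho/\gamma$; you instead check containment pointwise, expanding $|(\gamma+t)e^{i\phi}-\gamma|^2$ into three terms bounded by $\frac{\rho^2}{4}+\frac{\rho^2}{4}+\frac49\rho^2=\frac{17}{18}\rho^2$. This is the same elementary level, but your version treats every modulus in $[r_n,r_{n+1}]$ directly rather than arguing through the outer circle, so nothing is left implicit. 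For $\gamma/\rho\leq1+3\delta/m$ the difference is more substantive: the paper works only from the consequence $r_{n+1}/r_n\geq2^{m/(2\delta)}$ of Lemma~\ref{lem:rings}(i) plus the linearization $2^x\geq1+x\log 2$, and its displayed identity $(2/3)(1+1/(r_{n+1}/r_n-1))$ drops a factor $2$ (the correct expression is $1+2/(R-1)$ with $R=r_{n+1}/r_n$); with the corrected identity that route only yields $\gamma/\rho\leq\frac23+\frac{8}{3\log2}\frac\delta m$, and since $\frac{8}{3\log2}\approx3.85>3$ it cannot reach the stated constant when $\delta/m$ is large. Your argument uses the exact ratio $R=2^{m/(\delta+1)}$ from line~\ref{line:s} of Algorithm~\ref{alg:rings}, reduces the claim to $\varepsilon(\frac13+\frac{3\delta}m)\geq\frac43$ with $\varepsilon=R-1$, and splits $\delta\geq2$ from $\delta=1$ with a direct check for $m\leq4$; this is exactly the extra care needed, since, as you note, the constant is nearly tight at $\delta=m=1$ (there $\gamma/\rho=\frac23(3+2\sqrt2)\approx3.89$ against the bound $4$), so your write-up effectively repairs the paper's sketch rather than merely rephrasing it. The one soft spot is $\delta=0$: your observation that the stopping rule forces $r_{n+1}/r_n\geq2^m$ is correct, but it yields $\gamma/\rho\leq1$ only when $2^m\geq5$; since the paper's own proof does not cover $\delta=0$ either (Lemma~\ref{lem:rings}(i) assumes $\delta_n\geq1$), this does not put you behind it, though a sentence restricting the second claim to $\delta\geq1$, or to $m\geq3$ when $\delta=0$, would make your statement airtight.
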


\begin{proof}
The angular sector covered by $D(\gamma, \rho)$ has an angle $2\beta$,
where $\beta$ is the angle between two sides of lengths $\gamma$ and
$r_{n+1}$ in a triangle with sides of lengths $\gamma$,
$r_{n+1}$ and $\rho$. We have $\beta^2 \geq 2(1-\cos(\beta))$. Then
using arguments from the triangle geometry to bound $\cos(\beta)$, we
deduce $2\beta > (2/3)\sqrt{5/2} \rho/\gamma > \rho / \gamma$. 

Then for the bound on $\frac \gamma \rho$, we remark that it is equal to
$(2/3)(r_{n+1}+r_n)/(r_{n+1}-r_n) = (2/3)(1 + 1/(r_{n+1}/r_n-1))$. Moreover,
$r_{n+1}/r_n \geq 2^{m/(2\delta)} \geq 1 + (\log(2)/2) m/\delta$. With
$2/3 < 1$ and $4/(3\log(2)) < 3$, this allows us to conclude the proof.
\end{proof}

          
\vspace*{-0.15cm}
\subsubsection{Approximation bound}
To prove the bound on the approximation inequality (ii) of
Lemma~\ref{lem:approximations}, we need to bound two errors: the error
appearing while truncating at order $4m$ the polynomials
$f_{\ell_n,u_n}(\gamma + \rho T)$, and the error appearing on the
coefficients while computing with an absolute error in $2^{-4m}$ in
Algorithm~\ref{alg:approximations}.

First we bound the error coming from the error on the coefficients.  In
steps~\texttt{B} and~\texttt{C.1}, we normalize the polynomials such
that the sum of the absolute value of their coefficients is less than
$1$. Thus, at the end, the absolute error on each coefficient of
$g_{n,k}$ is less than $(\delta+1)2^{-4m}M$. We show that it is bounded
by $(\delta+1)2^{-m-1} \hat f(r_n)/r_n^{\ell_n}$ using the following
bound on
$M=m'(\hat f(\gamma)/\gamma^{\ell})(1+\rho/\gamma)^{\delta}$.

\begin{lemma}
  \label{lem:boundM}
  For all $m\geq1$:
  $$\|g_{n,k}\|_1 \leq M \leq (\delta+1)2^{3m-1}\hat f(r_n)/r_n^{\ell_n}$$
\end{lemma}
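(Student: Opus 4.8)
The plan is to bound $M = m' \frac{\hat f(\gamma)}{\gamma^{\ell_n}}\left(1+\frac{\rho}{\gamma}\right)^{\delta}$ factor by factor, using the geometric control on the ring radii provided by Lemma~\ref{lem:rings}(i). First I would observe that $m' = \lfloor \delta/K\rfloor \leq \delta+1$, which already contributes the $(\delta+1)$ factor, so the remaining task is to show $\frac{\hat f(\gamma)}{\gamma^{\ell_n}}\left(1+\frac{\rho}{\gamma}\right)^{\delta} \leq 2^{3m-1}\hat f(r_n) r_n^{-\ell_n}$. The claimed bound $\|g_{n,k}\|_1 \leq M$ should follow directly from the normalization in steps~\texttt{B} and~\texttt{C.1} of Algorithm~\ref{alg:approximations}: the coefficients of $p_k$ and $q_k$ have been divided so their $1$-norms are at most $1$, and $M$ is exactly the scaling factor reinserted at the end, with the extra $(1+\rho/\gamma)^\delta$ and $m'$ accounting for the triangle-inequality blow-up under composition and summation; I would spell this out briefly.

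Next I would handle $\left(1+\frac{\rho}{\gamma}\right)^{\delta}$. Since $\rho = \frac{3}{2}\cdot\frac{r_{n+1}-r_n}{2}$ and $\gamma = \frac{r_{n+1}+r_n}{2}$, we have $\frac{\rho}{\gamma} = \frac{3}{2}\cdot\frac{r_{n+1}-r_n}{r_{n+1}+r_n} < \frac{3}{2}$, so $\left(1+\frac{\rho}{\gamma}\right)^{\delta} < (5/2)^{\delta}$. To convert this into a power of $2^m$, I use Lemma~\ref{lem:rings}(i): $r_{n+1}^{\delta} \geq 2^{m/2} r_n^{\delta}$, hence $r_{n+1}/r_n \geq 2^{m/(2\delta)}$, which gives $\frac{\rho}{\gamma} = \frac{3}{2}\cdot\frac{r_{n+1}/r_n - 1}{r_{n+1}/r_n + 1} \leq \frac{3}{4}(r_{n+1}/r_n - 1)$; combined with $\delta \leq \frac{2m}{\log 2}\cdot\frac{\log(r_{n+1}/r_n)}{m}\cdot\frac{\delta}{\delta}$ — more directly, $\left(1+\frac{\rho}{\gamma}\right)^{\delta} \leq e^{\delta\rho/\gamma}$ and $\delta\log(r_{n+1}/r_n) \leq m$, so one bounds $\delta \cdot \frac{\rho}{\gamma}$ by a constant multiple of $m$, yielding $\left(1+\frac{\rho}{\gamma}\right)^{\delta} \leq 2^{cm}$ for an explicit small constant $c$.

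The remaining factor $\frac{\hat f(\gamma)}{\gamma^{\ell_n}}$ versus $\frac{\hat f(r_n)}{r_n^{\ell_n}}$ I would compare using that $r_n \leq \gamma \leq r_{n+1}$. Writing $\hat f(\gamma) = \max_j |f_j|\gamma^j$, each term satisfies $|f_j|\gamma^j \leq |f_j| r_{n+1}^j$ when $j\geq 0$; relating $\hat f(r_{n+1})$ to $\hat f(r_n)$ then again uses Lemma~\ref{lem:rings}(i) together with the fact that on the band only indices between (essentially) $\ell_n$ and $u_n$ matter, so the ratio $\gamma^{j-\ell_n}/r_n^{j-\ell_n} \leq (r_{n+1}/r_n)^{\delta} \leq 2^m$. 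Assembling the three bounds — $(\delta+1)$ from $m'$, roughly $2^m$ from the radius comparison, and roughly $2^{cm}$ from $(1+\rho/\gamma)^\delta$ — and checking that the constants sum to at most $3m-1$ finishes the argument. The main obstacle I anticipate is tracking the constants carefully enough through the composition step to land inside the stated $2^{3m-1}$ rather than a larger power of two; the inequality $\delta \geq 1 \Rightarrow r_{n+1}/r_n \geq 2^{m/(2\delta)}$ is the one lever that makes everything a controlled power of $2^m$, and the exponents $m/2$, $m$, and the $e^{\cdot}$-to-$2^{\cdot}$ conversion must be added up with a small margin to spare.
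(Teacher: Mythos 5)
Your overall plan is the paper's: split $M = m'\,\frac{\hat f(\gamma)}{\gamma^{\ell_n}}(1+\rho/\gamma)^{\delta}$ into three factors, compare $\hat f(\gamma)/\gamma^{\ell_n}$ to $\hat f(r_n)/r_n^{\ell_n}$ via $(\gamma/r_n)^{j-\ell_n}\leq (r_{n+1}/r_n)^{\delta}\leq 2^m$ (using that the maximizing index lies in $[\ell_n,u_n]$, which follows from inequality (iii) of Lemma~\ref{lem:rings}), and turn $(1+\rho/\gamma)^{\delta}$ into $2^{O(m)}$ via $e^{\delta\rho/\gamma}$. The only real divergence in allocation is harmless: you spend the $(\delta+1)$ slack on $m'\leq\delta$, whereas the paper shows $m'\leq m$ and absorbs it into $2^{3m-1}$.

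However, there is a genuine gap in your treatment of $(1+\rho/\gamma)^{\delta}$, which is the crux of the lemma. Your chain $\rho/\gamma=\frac32\,\frac{r_{n+1}/r_n-1}{r_{n+1}/r_n+1}\leq\frac34\,(r_{n+1}/r_n-1)$ combined only with $\delta\log_2(r_{n+1}/r_n)\leq m$ does \emph{not} give $\delta\,\rho/\gamma=O(m)$: inequality (i) of Lemma~\ref{lem:rings} allows $r_{n+1}/r_n$ as large as $2^{m/\delta}$, and $2^{m/\delta}-1$ is not $O(m/\delta)$ when $\delta$ is small (e.g.\ $\delta=1$, where your bound degenerates to $\delta\,\rho/\gamma\lesssim 2^{m}$, while the true value is below $3/2$). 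The step $\frac{x-1}{x+1}\leq\frac{x-1}{2}$ throws away exactly the saturation that saves this regime. The paper keeps the ratio form and writes $\frac{r_{n+1}-r_n}{r_{n+1}+r_n}=\tanh\bigl(\tfrac{\ln 2}{2}(s_{n+1}-s_n)\bigr)\leq\tfrac{\ln 2}{2}(s_{n+1}-s_n)\leq\tfrac{\ln 2}{2}\cdot\tfrac{m}{\delta+1}$, which holds uniformly in $\delta$ and yields $(1+\rho/\gamma)^{\delta}\leq 2^{3m/4}$; with that, your allocation $(\delta+1)\cdot 2^{m}\cdot 2^{3m/4}\leq(\delta+1)2^{3m-1}$ closes for all $m\geq 1$, whereas you left the final constant bookkeeping (which you flagged yourself) unresolved. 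A minor further remark: like the paper, you treat $\|g_{n,k}\|_1\leq M$ as a consequence of the normalizations in steps \texttt{B} and \texttt{C.1}; that is acceptable, but note it also uses that truncation mod $T^{4m+1}$ and the FFT in step \texttt{D} do not increase the relevant norms.
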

\begin{proof}
Let $j$ be
the index such that $\hat f(\gamma) = |f_j| \gamma^j$. We have
$|f_j|\gamma^{j-\ell_n} =
|f_j|r_n^{j-\ell_n}(\gamma/r_n)^{j-\ell_n}\leq(\hat
f(r_n)/r_n^{\ell_n})(r_{n+1}/r_n)^\delta$. With $(r_{n+1}/r_n)^\delta
\leq 2^m$, this leads to $M \leq m'(\hat
f(r_n)/r_n^{\ell_n})2^m(1+\rho/\gamma)^\delta$. Then, we can notice that
$(1+\rho/\gamma)^\delta \leq 2^{(1/\log(2))(\delta\rho/\gamma)}$.
\sloppy Moreover, $\rho/\gamma = (3/2) \tanh((s_{n+1}-s_n)\log(2)/2)$. Thus
$\rho/\gamma\leq 3\log(2)(s_{n+1}-s_n)/4\leq (3\log(2)/4)(m/\delta)$.
This leads to $(1+\rho/\gamma)^\delta \leq 2^{3m/4}$. This also leads to
$K\geq (8\pi/(3\log(2)))(\delta/m)$, such that $m' \leq m$. Finally,
this implies that $M$ is bounded by $m2^{7m/4}\hat f(r_n)/r_n^{\ell_n}$.
Moreover, for all $m\geq 1$, we have $m2^{7m/4} \leq 2^{3m-1}$.
\end{proof}


Then the bound on the error coming from the Taylor expansion is obtained by bounding
for a given $j$ the coefficients of the polynomial $|f_j|(\gamma + \rho
T)^{j-\ell_n}$ of degree greater than $4m$. The $k$-th coefficient
of this polynomial is $|f_j|\gamma^{j-\ell_n}\binom{j-\ell_n}{k}
(\rho/\gamma)^k \leq |f_j|\gamma^{j-\ell_n} ((e \delta/k)
(\rho/\gamma))^k$. With the bound on $\rho/\gamma$ above, this becomes
less than $|f_j|\gamma^{j-\ell_n}((3e\log(2)/4)(m/k))^k$. As
previously, we can bound $ |f_j|\gamma^{j-\ell_n}$ by $2^m\hat
f(r_n)/r_n^{\ell_n}$. The second factor can be bounded for all $k\geq
4m$ by $(3e\log(2)m/16)^k\leq 1/2^k$. Thus the sum of all the
coefficients of degree $4m+1$ or more is bounded by $(\hat
2^mf(r_n)/r_n^{\ell_n})2^{-4m}$. Adding the errors for $j$ from $\ell_n$
to $u_n$, we get a bound for the error on the remainder of
$(\delta+1)\hat f(r_n)/r_n^{\ell_n} 2^{-m-1}$.

Summing the two errors lead to the required bound for the
inequality~(ii) of Lemma~\ref{lem:approximations}.

\section{Evaluation}
\label{section_evaluation}

As a consequence of Lemma~\ref{lem:rings} and~\ref{lem:approximations},
the polynomials $g_{n,k}$ returned by Algorithm~\ref{alg:approximations}
form a piecewise polynomial approximation over the angular sectors
$A_{n,k}$ defined by the intersection between the ring $R(r_n, r_{n+1})$ and the disk
$D(\gamma_{n,k}, |\rho_{n,k}|)$. This is formalized in the following
corollary.

\begin{corollary}
  Let $z$ be a complex point. Then there exists $n$ and $k$ such that
  $r_n\leq |z|\leq r_{n+1}$ and $z\in D(\gamma_{n,k}, |\rho_{n,k}|)$.
  Moreover, letting $t=(z-\gamma_{n,k})/\rho_{n,k}$, the polynomial $g_{n,k}$ satisfies:
  $$|f(z) - z^{\ell_n}g_{n,k}(t)| \leq (d+1)2^{-m}\hat f(z)$$
\end{corollary}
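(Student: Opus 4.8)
The plan is to chain together Lemma~\ref{lem:rings} and Lemma~\ref{lem:approximations} via the triangle inequality, carefully tracking the error terms. First, given an arbitrary complex point $z$, the radii $r_0 = 0 < r_1 < \cdots < r_N = +\infty$ produced by Algorithm~\ref{alg:rings} partition $\mathbb{C}$ into rings, so there is a unique $n$ with $r_n \leq |z| \leq r_{n+1}$. For that $n$, inequality (i) of Lemma~\ref{lem:disk} (equivalently (i) of Lemma~\ref{lem:approximations}) guarantees that $R_n$ is covered by the disks $D(\gamma_{n,k}, |\rho_{n,k}|)$, so we may pick $k$ with $z \in D(\gamma_{n,k}, |\rho_{n,k}|)$; this yields $t = (z - \gamma_{n,k})/\rho_{n,k}$ with $|t| \leq 1$.

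Next I would split the error as
$$|f(z) - z^{\ell_n} g_{n,k}(t)| \leq |f(z) - z^{\ell_n} f_{\ell_n,u_n}(z)| + |z^{\ell_n}| \cdot |f_{\ell_n,u_n}(z) - g_{n,k}(t)|.$$
The first summand is bounded by $(d - \delta_n) 2^{-m} \hat{f}(|z|)$ using inequality (iii) of Lemma~\ref{lem:rings}. For the second summand, note that $f_{\ell_n,u_n}(z) = f_{\ell_n,u_n}(\gamma_{n,k} + \rho_{n,k} t)$, so it is the evaluation at $t$ of the polynomial $f_{\ell_n,u_n}(\gamma_{n,k} + \rho_{n,k} T) - g_{n,k}(T)$; since $|t| \leq 1$, its absolute value is at most the $\|\cdot\|_1$ norm of that polynomial, which by inequality (ii) of Lemma~\ref{lem:approximations} is at most $(\delta_n + 1) 2^{-m} \hat{f}(r_n) r_n^{-\ell_n}$.

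The main obstacle — really the only subtle point — is converting the factor $|z|^{\ell_n} \hat{f}(r_n) r_n^{-\ell_n}$ back into something bounded by $\hat{f}(|z|)$. Here I would use that $r_n \leq |z| \leq r_{n+1}$ together with inequality (i) of Lemma~\ref{lem:rings}: writing $\hat{f}(r_n) = \max_j |f_j| r_n^j$, for the maximizing index $j$ we have $|f_j| |z|^j \cdot (r_n/|z|)^{j} = |f_j| r_n^j$, and since $|z|^{\ell_n} r_n^{-\ell_n} \cdot (r_n/|z|)^{j} = (r_n/|z|)^{j - \ell_n}$ with $r_n \leq |z|$, the exponent is controlled once one checks $j \geq \ell_n$ is not required — rather, one bounds $(|z|/r_n)^{\ell_n} \cdot (r_n/|z|)^{j}$ using $\ell_n \leq u_n$ and $(r_{n+1}/r_n)^{\delta_n} \leq 2^m$. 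Concretely, $|z|^{\ell_n} \hat{f}(r_n) r_n^{-\ell_n} \leq (r_{n+1}/r_n)^{\ell_n} \hat{f}(r_n) \leq \hat{f}(|z|) \cdot (r_{n+1}/r_n)^{\delta_n} \leq 2^m \hat{f}(|z|)$ when $\delta_n \geq 1$ — but this overshoots by $2^m$, so one must be more careful and instead observe that the index realizing $\hat f(r_n)$ lies in $[\ell_n, u_n]$ (coefficients outside this range are negligible by (iii), or one restricts the max accordingly), giving $|z|^{\ell_n}|f_j| r_n^{j-\ell_n} \leq |f_j||z|^j$ directly since $r_n \leq |z|$ and $j \geq \ell_n$. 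Combining, the second summand is at most $(\delta_n + 1) 2^{-m} \hat{f}(|z|)$. Adding the two pieces gives $(d - \delta_n + \delta_n + 1) 2^{-m} \hat{f}(|z|) = (d+1) 2^{-m} \hat{f}(|z|)$, which is the claimed bound. I would close by remarking that $\hat{f}(z)$ in the statement means $\hat{f}(|z|)$ per the notation table.
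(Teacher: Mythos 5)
Your proposal is correct and follows exactly the route the paper intends: the paper states this corollary without proof as an immediate consequence of Lemmas~\ref{lem:rings} and~\ref{lem:approximations}, combined by the triangle inequality with $|t|\leq 1$ just as you do. Your resolution of the one delicate point --- replacing $\hat f(r_n)r_n^{-\ell_n}$ by $\hat f(|z|)|z|^{-\ell_n}$ via the observation that the index realizing $\hat f(r_n)$ lies in $[\ell_n,u_n]$ (which follows from the per-coefficient bound $|f_j|r_n^j\leq 2^{-m}\hat f(r_n)$ for $j<\ell_n$ or $j>u_n$ established in the paper's proof of inequality (iii), or directly from the definition of $\ell_n$ as the leftmost abscissa of $H_{-m}$ below the tangent line of slope $s_n$) --- is precisely the detail the paper glosses over, so after discarding your first overshooting attempt the argument is sound.
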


This leads to a straightforward method to evaluate $f$ on one point $z$ with an error in $O(d2^{-m}\widetilde f(|z|)$.  First
find the angular sector $A_{n,k}$ it belongs to, then 
compute $t =
(z-\gamma_{n,k})/\rho_{n,k}$ and then
evaluate
$z^{\ell_n}g_{n,k}(t)$. To evaluate $f$ at $d$ points, 
we use fast multipoint evaluation
\cite{kobel2013fast,Hrr08} to evaluate the
$g_{n,k}$.

\vspace*{-0.2cm}
\paragraph{Complexity analysis}

Assume that $z$ has a magnitude less than $2^\mu$.
Finding the angular sector $A_{n,k}$ can be done with a binary
search on the most significant bits for a total number of bit operations
in $\widetilde O(\log d + m+\log \mu)$. Then the change of variable
$t = (z-\gamma_{n,k})/\rho_{n,k}$ can be done within the same complexity.

Finally, we can evaluate $g_{n,k}(t)$ in $\widetilde O(m^2+\log \tau)$ bit
operations with a classical Hörner scheme, and we can evaluate $z^{\ell_n} = e^{\ell_n\log z}$ in $\widetilde O((m+\log
\mu + \log d)$ bit operations using arithmetico-geometric means for
the logarithm and the exponential.  Thus the total number of
bit operations for one evaluation is in $\widetilde O(m^2 + \log d + \log
\tau + \log\mu)$.

Furthermore, to evaluate $f$ on a set of $d$ points of magnitude less than $\log
\mu$, after gathering the points by angular sectors, 
using a
fast multipoint evaluation algorithm \cite{kobel2013fast,Hrr08} to evaluate
$g_{n,k}$ on each sector, leads to a number of bit operations in $\widetilde
O(d(m+\log \mu + \log \tau))$.

\vspace*{-0.2cm}
\paragraph{Computation of the error bound}
We can also compute the error bound efficiently with the explicit
formula $(d+1)2^{-m}\hat f(z)$. Evaluating $\hat f(|z|)$ requires first to
find the index $j$ such that $\hat f(|z|) = |f_j||z|^j$. This amounts to
find the vertex of the Newton polygon $H$ belonging to the line of slope $\log
|z|$ tangent to $H$. 
If the slopes of the edges of $H$ are computed beforehand
this can be done with a binary search $\widetilde O(\log d + m + \log \mu)$ bit operations. Then evaluating the
formula can be done in $\widetilde O(m + \log \mu + \log d + \log \tau)$ bit
operations. Thus, the error bound can be evaluated within the same
number of operations as the approximate evaluation of $f$ itself.

%
%

\section{Root finding}
\label{section_isolation}

%

Our algorithm to isolate the roots of $f$ consists in computing first a piecewise
approximation of $f$, then finding the
approximations $\dot \zeta$ to the roots of each approximate  polynomial, and finally, for each approximate root  $\dot \zeta$,
computing a disk centered on $\dot \zeta$
that contains a root $\zeta$ of $f$ if $\zeta$ is well-conditioned.
It is described in more details in Algorithm~\ref{alg:rootfinding}.

The main lemma that guarantees that our algorithm approximates correctly
all the well-conditioned roots of $f$ is the following.

\begin{lemma}
  \label{lem:rootfinding}
  With the notations of Algorithm~\ref{alg:rootfinding}:
  \begin{itemize}
    \item[(i)] each $D(\dot\zeta,r)\in\dot Z$ contains a unique root $\zeta$ of $f$
      and the Newton algorithm starting at any point in the disk
      converges toward $\zeta$
    \item[(ii)] each root $\zeta$ of $f$ such that
      $2\log(\cond(f,\zeta))+3\log_2(d+1) + 11<m$ is contained in a disk $D(\dot\zeta,r) \in
      \dot F$
  \end{itemize}
\end{lemma}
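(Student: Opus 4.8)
The plan is to pass, via Rouché's theorem, between the roots of $f$ and the roots of the piecewise approximant computed by Algorithm~\ref{alg:approximations}, using the uniform error bound of the corollary of Section~\ref{section_evaluation}, and then to invoke a robust Newton-convergence criterion for part~(i). On a sector $A_{n,k}$ write $F_{n,k}(z) := z^{\ell_n} g_{n,k}\!\left((z-\gamma_{n,k})/\rho_{n,k}\right)$, so that $|f(z) - F_{n,k}(z)| \le (d+1)2^{-m}\hat f(|z|) \le (d+1)2^{-m}\widetilde f(|z|)$ for all $z\in A_{n,k}$. Since $\widetilde f$ and $\hat f$ vary across a ring by an explicit factor controlled by Lemma~\ref{lem:rings}(i) and each sector is a bounded portion of its ring, I would package this as a single bound $\|f-F_{n,k}\|_{A_{n,k}} \le \varepsilon_{n,k}$ with $\varepsilon_{n,k}$ comparable to $(d+1)2^{-m}\widetilde f(r_n)$; morally, the approximant is a uniform relative perturbation of $f$ of size $O((d+1)2^{-m})$ on each sector.

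For part~(i), each $\dot\zeta\in\dot Z$ is produced by the algorithm from a simple root of some approximant $F_{n,k}$, together with certified data — the test the algorithm performs before retaining $\dot\zeta$ — from which one reads off that $|F_{n,k}(z)| > \varepsilon_{n,k}$ on every circle $\partial D(\dot\zeta,\rho)$ with $r\le\rho\le R$ for some $R$ suitably larger than $2r$, and that $F_{n,k}$ has a single zero in $D(\dot\zeta,R)$. By Rouché, for each such $\rho$ the polynomials $f$ and $F_{n,k}$ have the same number of zeros in $D(\dot\zeta,\rho)$; taking $\rho=r$ and $\rho=R$ shows that $D(\dot\zeta,r)$ contains exactly one zero $\zeta$ of $f$ and that $D(\dot\zeta,R)$ contains no other one, so $D(\dot\zeta,r)$ is well-isolated for $f$. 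A robust Newton criterion (a unique root in a disk together with sufficient isolation from the remaining roots forces the Newton iteration of $f$ to converge to that root from every point of the disk) then concludes the proof of~(i).

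For part~(ii), let $\zeta$ be a root of $f$ with $2\log(\cond(f,\zeta)) + 3\log_2(d+1) + 11 < m$ and let $A_{n,k}$ be a sector containing $\zeta$. From $\cond(f,\zeta) = \widetilde f(|\zeta|)/(|\zeta|\,|f'(\zeta)|)$ and $f(z) = f'(\zeta)(z-\zeta) + O((z-\zeta)^2)$ I would choose a radius $r_\zeta$ of order $(d+1)2^{-m}\cond(f,\zeta)\,|\zeta|$ and show that on $\partial D(\zeta,r_\zeta)$ one has $|f(z)| \ge \frac12|f'(\zeta)|\,r_\zeta \ge c\,(d+1)2^{-m}\widetilde f(|\zeta|) > \varepsilon_{n,k} \ge |f(z)-F_{n,k}(z)|$ for an explicit constant $c$. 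The slack in the threshold is exactly what makes $r_\zeta$ small enough both to keep $D(\zeta,r_\zeta)$ inside $A_{n,k}$ and to render the quadratic Taylor remainder negligible — a well-conditioned root is automatically isolated from the other roots of $f$, which bounds the higher Taylor coefficients of $f$ at $\zeta$. By Rouché, $F_{n,k}$ has a zero in $D(\zeta,r_\zeta)$; hence the algorithm produces an approximate root $\dot\zeta$ with $\zeta\in D(\dot\zeta,r_\zeta)$, and the remaining slack guarantees that the disk $D(\dot\zeta,r)$ the algorithm associates to $\dot\zeta$ and records in $\dot F$ contains $\zeta$ (and, the same estimate permitting, is also certified, i.e.\ belongs to $\dot Z$).

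The step I expect to be the main obstacle is the bookkeeping of constants along the chain (approximation error) $\to$ ($F_{n,k}$ has a nearby root, by Rouché) $\to$ (isolation radius certified for $f$) $\to$ (isolation margin demanded by the robust Newton criterion): each link consumes a fixed number of bits of precision, and in part~(ii) a further factor $\cond(f,\zeta)$, so calibrating the whole chain to the explicit threshold $2\log(\cond(f,\zeta)) + 3\log_2(d+1) + 11$ — and in particular bounding the quadratic Taylor remainder through the automatic isolation of well-conditioned roots — is where the real work lies.
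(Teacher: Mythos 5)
Your plan does not prove the statement about what Algorithm~\ref{alg:rootfinding} actually returns, and the place where it breaks is part~(i). You write that each retained $\dot\zeta$ comes ``together with certified data --- the test the algorithm performs --- from which one reads off that $|F_{n,k}(z)|>\varepsilon_{n,k}$ on every circle $\partial D(\dot\zeta,\rho)$''. The algorithm performs no such circle test and certifies no lower bound on the approximant along circles; the only test is the criterion on line~\ref{line:criterion}, namely $4r<|\dot\zeta|(2^{1/d}-1)$ and $5rK<1$, where $r=2(|\dot f|+\varepsilon(\dot\zeta))/(|\dot f'|-\varepsilon'(\dot\zeta))$ and $K=2d^3\hat f(|\dot\zeta|)/(|\dot\zeta|^2(|\dot f'|-\varepsilon'(\dot\zeta)))$ are built from certified evaluations of $f$ and $f'$ at $\dot\zeta$ and from the explicit bound $|f''(z)|\leq 2d^3\hat f(|\dot\zeta|)/|\dot\zeta|^2$ valid for $|z|\leq|\dot\zeta|2^{1/d}$. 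The paper's proof of (i) is exactly the observation that this is a Kantorovich criterion ($r>2|f(\dot\zeta)/f'(\dot\zeta)|$, $K>|f''(y)/f'(\dot\zeta)|$ on $D(\dot\zeta,4r)$, $5rK\leq 1$), which simultaneously gives uniqueness of the root in $D(\dot\zeta,r)$ and Newton convergence from every point of the disk. Your replacement --- Rouch\'e between $f$ and $F_{n,k}$ plus ``a robust Newton criterion'' asserting that uniqueness together with unquantified isolation forces convergence from every point of the disk --- both argues from data the algorithm does not produce and invokes a convergence statement that is not a theorem without quantitative hypotheses of precisely the Kantorovich type.

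For part~(ii) the skeleton is closer, but the real content is deferred. The paper does not use Rouch\'e at all: it first bounds $|\zeta-\dot\zeta|$ by the classical inequality $\mathrm{dist}(\theta,\text{roots of }\dot g)\leq \deg(\dot g)\,|\dot g(\theta)/\dot g'(\theta)|$, where $\dot g$ is the \emph{approximately factored} polynomial with $\|g-\dot g\|_1\leq 2^{-4m}\|g\|_1$ --- an extra error layer your argument ignores and which is controlled only through Lemma~\ref{lem:boundM} ($\|g\|_1\leq(\delta+1)2^{3m-1}\hat f(r_n)/r_n^{\ell_n}$). This yields $|\zeta-\dot\zeta|/|\zeta|\leq 4(d+1)2^{-m}\cond(f,\zeta)$. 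The second and longer step is to propagate this into the algorithm's own quantities: Taylor expansion of $f$ and $f'$ at $\zeta$ gives upper bounds on $|\dot f|+\varepsilon(\dot\zeta)$ and a lower bound on $|\dot f'|-\varepsilon'(\dot\zeta)$, hence bounds on $r$ and $K$, from which $5rK<1$ and $4r<|\dot\zeta|(2^{1/d}-1)$ follow when $m>2\log_2(\cond(f,\zeta))+3\log_2(d+1)+11$. In your write-up this entire chain is compressed into ``the remaining slack guarantees\ldots'', which is exactly the part you yourself flag as the main obstacle; without it the explicit threshold in the lemma, and the fact that the returned disk is the one with the algorithm's radius $r$, are not established.
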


To ensure that we find all the root $\zeta$ such that
$\cond(f,\zeta)\leq 2^m$, it suffices to change $m$ by $2m +
3\log_2(d+1) +11$ before
running the algorithm.
%
We prove inequality (i) in Section~\ref{sec:rootunique} and inequality
(ii) in Section~\ref{sec:rootterminates}

For the complexity of the algorithm, the dominating parts are: the computation
of the piecewise approximation in step \texttt{A}, the $O(d/m)$ root approximations in step
\texttt{B}, and the approximate evaluations $\dot F$ and $\dot F'$ in step \texttt{C}. The
computation of the piecewise approximation and the evaluation can be
done in $\widetilde O(d(m+\log \tau))$ bit operations, and each root approximation can
be done in $\widetilde O(m^2)$ bit operations, which prove the
complexity stated in Theorem~\ref{thm:rootfinding}.

%
%

\begin{algorithm}
  \DontPrintSemicolon
  \KwInput{$m$ a positive integer and a polynomial $f$ of degree $d$}
  \KwOutput{List of isolating disks of the roots $\zeta$ of $f$ such that
  $2\log_2(\cond(f,\zeta))+3\log_2(d+1) +11 \leq m$}
    \tcc{A. Compute the piecewise approximation}
    $A_m(f) \gets$ the piecewise approximation of $f$\;
    \vspace{1em}
    \tcc{B. Compute the approximate roots}
    $Z, \dot Z \gets \{\}$\;
    \For{$(g, \gamma, \rho)$ in $A_m(f)$ \label{num:startfor}}{
      $\dot g \gets$ Approximate factorization of $g$ such that \label{line:factor}\\
      \phantom{$\dot g\gets$} $\|g - \dot g\|_1 \leq 2^{-4m} \|g\|_1$\;
      $\dot Z \gets \dot Z \cup \{\gamma + \rho \dot t \mid \dot t$ root of $\dot g \}$\;
      \label{num:endfor}
    }
    \vspace{1em}
    \tcc{C. Evaluate $f$ and $f'$ on the approximate roots}
    $\varepsilon(z) \gets z \mapsto (d+1) 2^{-m}\hat f(|z|)$\;
    $\varepsilon'(z) \gets z \mapsto d^2 2^{-m}\hat f(|z|)/|z|$\;
    $\dot F \gets$ Approximate evaluations of $f$ on $\dot Z$ with error $\varepsilon$\;
    $\dot F' \gets$ Approximate evaluations of $f'$ on $\dot Z$ with
    error $\varepsilon'$\;
    \vspace{1em}
    \tcc{D. Compute the isolating disks}
    \For{$(\dot \zeta, \dot f, \dot f') \in \dot Z \times \dot F \times \dot
    F'$ s. t. $\dot f\simeq f(\dot \zeta)$ and $\dot f' \simeq f'(\dot
  \zeta)$}{
      $r \gets 2 \frac{|\dot f| + \varepsilon(\dot \zeta)}{|\dot f'| -
      \varepsilon'(\dot \zeta)}$\;
      $K \gets 2 d^3 \frac{\hat f(|\dot \zeta|)}{|\dot \zeta|^2(|\dot
      f'| - \varepsilon'(\dot \zeta))}$ \label{line:K}\;
      \If{$4r < |\dot\zeta|(2^{1/d}-1)$ and $5 r K < 1$ \label{line:criterion}}{
        $Z \gets Z \cup \{D(\dot \zeta, r)\}$\;
      }
    }
    \Return $L$\;
  \caption{Root isolation}
  \label{alg:rootfinding}
\end{algorithm}

\subsection{Guarantee that the each returned disk isolate a root of $f$}
\label{sec:rootunique}
Given a root $\zeta$ of $f$, the Newton bassin of $\zeta$ is the set of
initial points such that the Newton method converges toward $\zeta$.
Using a bound on the second derivative of $f$, the Kantorovich's theory
allows us to give a bound on the radius of a disk centered at $\dot \zeta$
that is included in the Newton bassin of $\zeta$ \cite[\S 3.2]{Dbook06},\cite[Lemma 1]{moroz2022new}. Namely, let $r >
2|f(\dot \zeta)/f'(\dot \zeta)|$ and $K > |f''(y)/f'(\dot z)|$ for all
$y \in D(\dot \zeta, 4r)$. If $5rK \leq 1$ then, $f$ has a unique root
$\zeta$ in $D(\dot \zeta, r)$, and for all $z \in D(\dot \zeta,r)$, the
Newton sequence starting from $z$ converges to $\zeta$.


In our case, remark that for all $z$ such that $|z| \leq |\dot \zeta|
2^{1/d}$, we have $|f''(z)| \leq d^2\widetilde
f_{2,d}(|z|)\leq 2 d^2\widetilde f_{2,d}(|\dot\zeta|)\leq 2
d^2\widetilde f(|\dot\zeta|)/|\dot\zeta|^2\leq 2d^3\hat f(|\dot\zeta|)/|\dot\zeta|^2$.
%
This ensures that the number $K$ computed on line~\ref{line:K} of
Algorithm~\ref{alg:rootfinding} is an upper bound on $\max_{z\in
D(\dot\zeta,r)} |f''(z)/f'(\dot\zeta)|$. Thus using the Kantorovich's
theory, this ensures that 
when the criterion on 
line~\ref{line:criterion} of Algorithm~\ref{alg:rootfinding} is
satisfied, the returned disk isolates a root of $f$. Moreover, if
two such disks intersect, then the root is in their intersection since
the Newton method starting from any point in their
intersection will converge toward a unique root in the
their intersection.

\subsection{Guarantee that each well-conditioned root is contained in a
returned disk}
\label{sec:rootterminates}
Given a root $\zeta$ of $f$, the proof of inequality (ii) of Lemma~\ref{lem:rootfinding} is done in
two steps. First we show that $\dot F$ contains a root close enough to
$\zeta$. Then we show that the Kantorovich criterion on
line~\ref{line:criterion} is satisfied.

In this section, we assume that
$\zeta$ belongs to an angular sector $A_{n,k}$,
and we let $g$ be the corresponding approximate polynomial, and $\gamma$ and $\rho$ be the parameters for the change of
variable $z = \gamma + \rho t$. Then $\dot g$ denotes the approximate
polynomial of $g$ obtained by factorization in line~\ref{line:factor} of
Algorithm~\ref{alg:rootfinding}. 
Finally, let $\dot \zeta$ be the closest point to $\zeta$ such that
$\dot g((\dot \zeta - \gamma)/\rho) = 0$, and let
$\theta = (\zeta - \gamma)/\rho$ and  $\dot \theta = (\dot
\zeta - \gamma)/\rho$, such that $f(\zeta) = 0$ and $\dot g(\dot \theta) = 0$.
       
\subsubsection{Bound on the distance of the approximated roots}
\label{sec:rootbound}
                                 
In this section we compute an upper bound on the distance
between $\zeta$ and $\dot \zeta$. First it is known (\cite[Theorem
6.4e]{Hbook74}, \cite[Theorem 9]{BAna00}) that for a given point $\theta$,                                                             
the distance between $\theta$ and the closest root $\dot \theta$ of
$\dot g$ is bounded by $\deg(\dot g) |\dot g(\theta)/\dot g'(\theta)|
\leq d |\dot g(\theta)/\dot g'(\theta)|$.
%
Then by construction, $\dot g$ satisfies 
  $\left|g\left(\theta\right) - \dot g(\theta)\right| \leq
  2^{-4m}\|g\|_1$ and
  $\left|g'\left(\theta\right) - \dot g'(\theta)\right| \leq d
  2^{-4m}\|g\|_1$.

Using Lemma~\ref{lem:boundM} we have $\|g\|_1 \leq
(\delta+1)2^{-3m-1}\hat f(\|\zeta|)/|\zeta|^{\ell_n}$.
With the triangular inequality, combining the
inequality on $|g(\theta) - \dot g(\theta)|$  
with the inequalities (iii) of
Lemma~\ref{lem:rings} and (ii) of Lemma~\ref{lem:approximations}, leads to $\left|\dot g(\theta)\right| \leq
2(d+1)2^{-m} \widetilde f(|\zeta|)/|\zeta|^{\ell_n}$.

For the derivative, we can show with similar arguments that
$|\rho f'_{\ell_n,u_n}(\gamma+\rho\theta) - \dot g'(\zeta)|\leq
2(d+1)^2\widetilde f(|\zeta|)/|\zeta|^{\ell_n}$. We can also deduce from
inequality (iii) of Lemma~\ref{lem:rings} that $\rho|f'(\zeta)
- \zeta^{\ell_n} f'_{\ell_n,u_n}(\zeta)| \leq \rho d^2 \widetilde
f(|\zeta|)/|\zeta|$. Finally, remark that $|\zeta|/\rho \leq
r_{n+1}/((3/4)(r_{n+1}-r_n))\leq (4/3)(1/(1-r_n/r_{n+1})) \leq
(4/3)(1/(1-2^{-m/2d}))\leq (4/3)(2d/m+1)$.
This implies that for $m\geq 4$ and $d\geq1$ we have $|\zeta|/\rho \leq
d+1$ and $
  \left|\dot g'\left(\theta\right) -
  \rho{f'(\zeta)}/{|\zeta|^\ell_n}\right| \leq 3 \rho (d+1)^32^{-m}
  {\widetilde f(|\zeta|)}/{|\zeta| |\zeta|^{\ell_n}}$.

%
%
%

Combining all the previous inequalities, this implies that for $m \geq
\log_2(\cond(f,\zeta))+3\log_2(d+1)+4$ we have:
  $$\frac{|\zeta - \dot \zeta|}{|\zeta|} \leq 4 (d+1) 2^{-m} \cond(f,\zeta)$$

\subsubsection{Guarantee that each well-conditioned root is returned}
Finally, to prove the correctness of Algorithm~\ref{alg:rootfinding}, it
remains to prove that if 
$2\cond(f,\zeta) + 3\log_2(d+1) + 11\leq m$, then 
the criterion on line~\ref{line:criterion} applied to $\dot \zeta$ evaluates to true.
Although the proof is technical, the main idea is that the variable
$r$ is roughly proportional to $|\zeta - \dot\zeta|$. Thus, as soon as
$|\zeta - \dot\zeta|$ is small enough, rK will become small enough to validate the
criterion. 

First, if $m > \cond(f,\zeta) + 3 \log_2(d+1) +4$, the bound on
$|\zeta-\dot\zeta|$ ensures that
$|\zeta|2^{-1/d} \leq |\dot \zeta| \leq |\zeta|2^{1/d}$. In particular this
implies $\widetilde f(|\dot \zeta|) \leq 2 \widetilde f(|\zeta|).$
Moreover for all points of module less than $|\zeta|2^{1/d}$ we have
$|f'(z)|\leq 2d\widetilde f(|\zeta|)/|\zeta|$ and $|f''(z)| \leq 2d^2
\widetilde f(|\zeta|)/|\zeta|^2$.

Thus we can bound
$|f(\dot \zeta)|$  and $|f'(\dot\zeta)|$ using a Taylor expansion around
$\zeta$ with $|f(\dot\zeta)| \leq |\zeta-\dot\zeta||f'(\zeta)| + d^2
|\zeta-\dot\zeta|^2\hat f(|\zeta|)/|\zeta|^2$ and 
$|f'(\dot\zeta)| \geq |f'(\zeta)| - 2d^2 |\zeta-\dot\zeta|\widetilde
f(|\zeta|)/|\zeta|^2$.


Letting $C = 6(d+1)\cond(f,\zeta)$, this leads to $|\dot f| +
\varepsilon(\dot\zeta) \leq |\zeta||f'(\zeta)| (2^{-m}C + d2^{-2m}C^3)$
and $|\dot f'| -\varepsilon'(\dot\zeta) \geq |f'(\zeta)|\left(1 -
d2^{-m}C^2 - d2^{-m}C\right)$.


Thus as soon as $d2^{-m}C(C+1) < 1/2$, we have $r \leq
6\cdot2^{-m}C|\zeta|$ and $K \leq 8d^3 \widetilde
f(|\zeta|)/(|\zeta|^2|f'(\zeta)|)$,
such that $5rK \leq 40 d 2^{-m}C^2$, which is below $1$ when 
$40d2^{-m}C^2<1$, and $r/|\dot\zeta| \leq 6\cdot 2^{-m}C2^{1/d}$, which is
below $\log(2)/d < 2^{1/d}-1$ when $18d\cdot2^{-m}C < \log(2)$. All those
constraints are satisfied for $1500 (d+1)^3 \cond^2(f,\zeta) < 2^m$,
which is satisfied for $m > 2\log_2(\cond(f,\zeta)) + 3\log_2(d+1) +
11$.

\section{Benchmarks and applications}
\label{section_benchmarks}

    \newcommand {\R}   {\mathbb R}
	\newcommand {\Q}   {\mathbb Q}

\newcommand{\intbox}{%
	  {\,\setlength{\unitlength}{.33mm}\framebox(4,7){}\,}}

\newcommand{\ball}[2]{ \intbox_{#1}{#2} }
\newcommand {\aprec} {p}
\newcommand{\ballp}[1]{ \ball{\aprec}{#1} }
	  
\newcommand{\machine}{\texttt{Intel(R) Core(TM) i7-8700 CPU @ 3.20GHz}\xspace}
\newcommand{\clang}{\texttt{C}\xspace}
\newcommand{\pwpoly}{\texttt{PWPoly}\xspace}
\newcommand{\mpsolve}{\texttt{MPSolve}\xspace}
\newcommand{\hcroots}{\texttt{HCRoots}\xspace}
\newcommand{\pwroots}{\texttt{PWRoots}\xspace}
\newcommand{\pweval}{\texttt{PWEval}\xspace}
\newcommand{\pwpolyurl}{https://gitlab.inria.fr/gamble/pwpoly}
\newcommand{\arb}{\texttt{Arb}\xspace}
\newcommand{\arburl}{https://arblib.org}
\newcommand{\mpsolveurl}{https://numpi.dm.unipi.it/scientific-computing-libraries/mpsolve/}
\newcommand{\mpfr}{\texttt{mpfr}\xspace}

%
%
%
The approaches presented here 
are not only of theoretical interest, and can be applied to solve problems involving well-conditioned polynomials of large degrees with coefficients of different orders of magnitude, for instance truncated series
expansions of
Gaussian analytic functions, giving polynomials with repulsion at the first order,
or truncated series
expansions of hypergeometric functions.
%
We
demonstrate their practical efficiency for
several families of well-conditioned
polynomials
that are random polynomials associated
with hyperbolic, elliptic and flat
distributions, as well as more structured polynomials
as resultant of random bivariate polynomials
appearing when solving multi-variate systems
with elimination.

    
For $d>=0$ 
we define the hyperbolic,
elliptic and flat bases as
$\left( \mathcal{H}_i \right)_{0\leq i \leq d}:=\left( 1 \right)_{0\leq i \leq d}$,
$\left( \mathcal{E}_i \right)_{0\leq i \leq d}:=\left( \sqrt{d\choose i} \right)_{0\leq i \leq d}$
and 
$\left( \mathcal{F}_i \right)_{0\leq i \leq d}:=\left( 1/\sqrt{i!} \right)_{0\leq i \leq d}$.
A polynomial $f$ can be decomposed as
$$f(z)=\sum_{i=0}^d a_i\mathcal{H}_iz^i
   =\sum_{i=0}^d b_i\mathcal{E}_iz^i
   =\sum_{i=0}^d c_i\mathcal{F}_iz^i.$$
$f$ is said random hyperbolic (respectively elliptic, flat)
when the $a_i$'s (respectively the $b_i$'s, $c_i$'s)
are random numbers.
%
\newcommand{\pev}{z}
\newcommand{\pevval}{\dot f}
\newcommand{\pevset}{Z}

\subsection{Implementation}

Our prototype implementation\footnote{available at \url{\pwpolyurl}}
in \clang
of the algorithms 
presented in this article
is
based on the \clang libraries \arb\footnote{\url{\arburl}} 
(which provides multiprecision 
ball arithmetic)
and \mpsolve\footnote{\url{\mpsolveurl}}
(see \cite{bini2014solving}, which provides 
a solver for univariate polynomials
relying on Ehrlich's, aka Aberth's, iterations
with 
multiprecision floating point arithmetic).

\pweval implements the evaluation process
described in Sec.~\ref{section_evaluation}
and \pwroots the
root isolation algorithm \ref{alg:rootfinding};
they
take in input 
a polynomial 
an $f$ 
and an $m$
and compute a piecewise approximation of $f$
using ball arithmetic.
\pwroots 
outputs a set of $d'$ pairwise
disjoints complex discs. If $d'=d$, each root of $f$ is isolated in a disc of the output.
Step \ref{line:factor} of algorithm \ref{alg:rootfinding}
is achieved 
by applying the solver using secular equations of \mpsolve.
\pweval also takes in input a set $\pevset$ of points in $\mathbb C$ 
and output for each $\pev\in\pevset$
a disc containing $f(\pev)$
of radius
less than $2^{-m} d \hat f(|\pev|)$.
Points are evaluated one by one
(without multi-point evaluation).




We introduced in our implementation a slight variation
of Algo.~\ref{alg:rings}: 
in step \ref{line:s},
we compute 
$s_{n+1}$ as 
$s_{n} + c\frac{m}{u_n-\ell_n+1}$
where $c$ is a real positive parameter in $O(1)$.
While preserving the bit complexity of the overall approach
this changes the subdivision of the complex plane 
in sectors and the piecewise approximation
in the following way: 
the greatest is $c$,
the widest are the rings,
the less are the numbers of angular sectors in rings
and the degrees of approximations 
and the fastest is 
the computation of the piecewise approximation
because
there are less approximations to compute.


Figures \ref{figure:newton_polygons} 
and
\ref{figure:sectors_partitions}
show the Newton polygons
and the sectors partitions of the complex plane
for hyperbolic, elliptic and flat random polynomials
obtained with our implementation with $c=1$.

We observed that the bottleneck in \pweval
was the computation of the piecewise approximation;
we choose $c=7/2$ in \pweval 
which 
makes this step faster.
In \pwroots, the dominant step
is the approximation of the roots 
of the approximating polynomials
with \mpsolve which complexity is quadratic in their degrees; we choose $c=2/5$
to produce 
approximations with smaller degrees.

\vspace*{-0.2cm}
\subsection{Numerical results}
\label{subsec_numerical_results}

All the times given below are sequential times in seconds on a \machine machine on Linux.

The input polynomials we consider in our tests are
$53$-bits floating point approximations
of random dense hyperbolic, elliptic and flat polynomials
which integers coefficient factors with a uniform law
in the interval $[2^{-8}, 2^8]$.
For root isolation,
we also consider $53$-bits floating point approximations of
resultant polynomials of two bivariate random dense hyperbolic polynomials.

We generate sets of random complex points
with rational real and imaginary parts
with numerator and denominators uniformly
chosen in $[2^{-16}, 2^{16}]$.

\vspace*{-0.15cm}
\subsubsection{Evaluation}


We used \pweval to evaluate 
random dense
hyperbolic, elliptic and flat polynomials
of increasing degree $d$ at $d$ points.
We compare the running time of \pweval
with the time required to evaluate 
$f$ at the same set of $d$ points with 
the 
a rectangular splitting algorithm 
(see \cite{brent2010modern, johansson2014evaluating})
available 
in \arb.
We choose $m$ (the output precision for 
\pweval) and the precision for
\arb so that the medians of the $\log_2$
of the errors relative to $\widetilde{f}$
of the evaluations 
is about $-50 \pm 10\%$.
In our tests, evaluations with Horner's rule 
were always slower than evaluations
with rectangular splitting.

Figure~\ref{figure:bench_eval} shows those
running times in an histogram
while detailing for \pweval
the time for computing the piecewise approximation and the time for evaluating it 
at the $d$ points (respectively called ``pw comp'' and ``pw eval'' in fig.~\ref{figure:bench_eval}).
\pweval becomes faster than the evaluation with rectangular splitting of $f$ for degrees above $40000$.
Notice also that once the piecewise approximation is computed, evaluating it at a single point is several orders of magnitude faster than one evaluation (at a single point) with rectangular splitting.


\vspace*{-0.15cm}
\subsubsection{Root isolation}

We used \pwroots to isolate the roots of 
the polynomials of our test suite
with $m=2(30+\lceil \log_2(d+1) \rceil)$ which
allowed \pwroots to always isolate all the roots.

We first compare the running times of \pwroots
and \hcroots
which is a \clang implementation of the root finding algorithm of \cite{moroz2022new}[Sec 5, Algo. 3].
\hcroots 
computes a piecewise approximation at a given precision $m$
which is dedicated to hyperbolic polynomials;
the root isolation process 
is embedded in a loop where $m$ is doubled while
all the roots are not isolated.
When $m$ reaches $d$, a subdivision 
root isolator with a complexity quadratic in $d$ is used.
We give in table \ref{table_pwpoly_hcroots} running times of \pwroots
and \hcroots 
for small values of $d$.
 For hyperbolic polynomials, the running times
 of both \hcroots and \pwroots
 are more or less linear in $d$;
 \hcroots is faster by a constant factor
 due to specificities of the implementations.
 \hcroots is not well adapted
 to other weights than the hyperbolic ones
 and in these cases its running time
 is dominated by the one of the quadratic solver for $f$, as depicted in table \ref{table_pwpoly_hcroots} (elliptic and flat cases).

%
%


 
 We also compare \pwroots and \mpsolve for hyperbolic, elliptic and flat random dense polynomials
 of degrees up to $20000$
 in
fig.~\ref{figure:bench_isolate}
where one can observe the linear
complexity in $d$ of \pwroots
(for the well conditionned polynomials
in consideration)
in contrast to the one of
\mpsolve which is quadratic in $d$.

Figure \ref{figure:bench_isolate_res} 
compares \pwroots
and \mpsolve
for isolating the roots of 
a resultant polynomial of degree $d$
of two bivariate random dense hyperbolic
polynomials of degree $\sqrt{d}$,
for $\sqrt{d}$ up to $100$.
\pwroots becomes faster than \mpsolve
for $d$ above $60^2$.

\bibliographystyle{plain}
\bibliography{references}

\appendix
\section{Benchmarks}

\begin{figure}[h]
  \includegraphics[width=\linewidth]{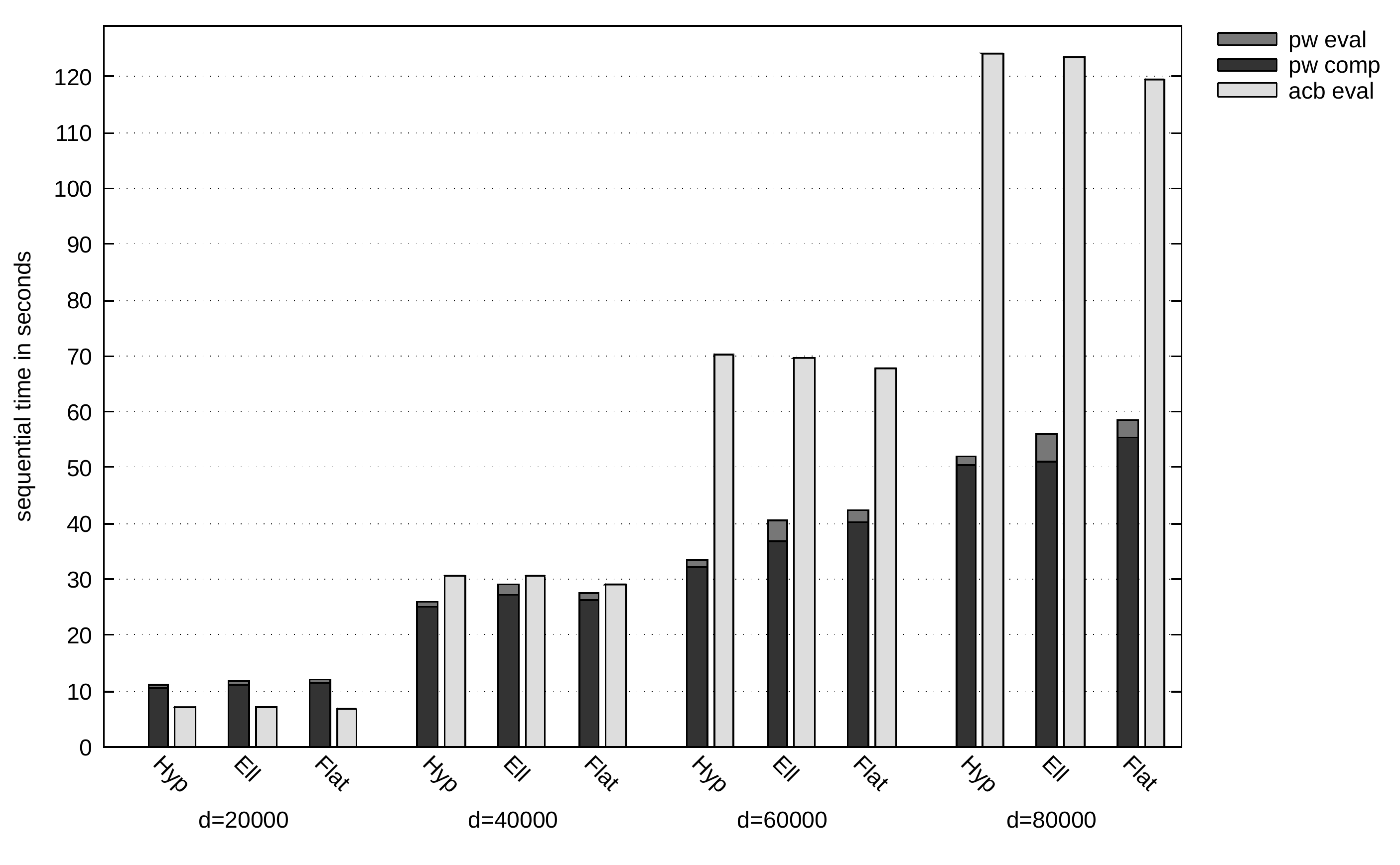}
  \vspace*{-0.5cm}
  \caption{ Evaluating random dense polynomials of degree
  $d$ at $d$ random points with \pweval and the rectangular splitting evaluation in \arb.
  }
  \label{figure:bench_eval}
\end{figure}

\begin{figure}[h]
  \includegraphics[width=\linewidth]{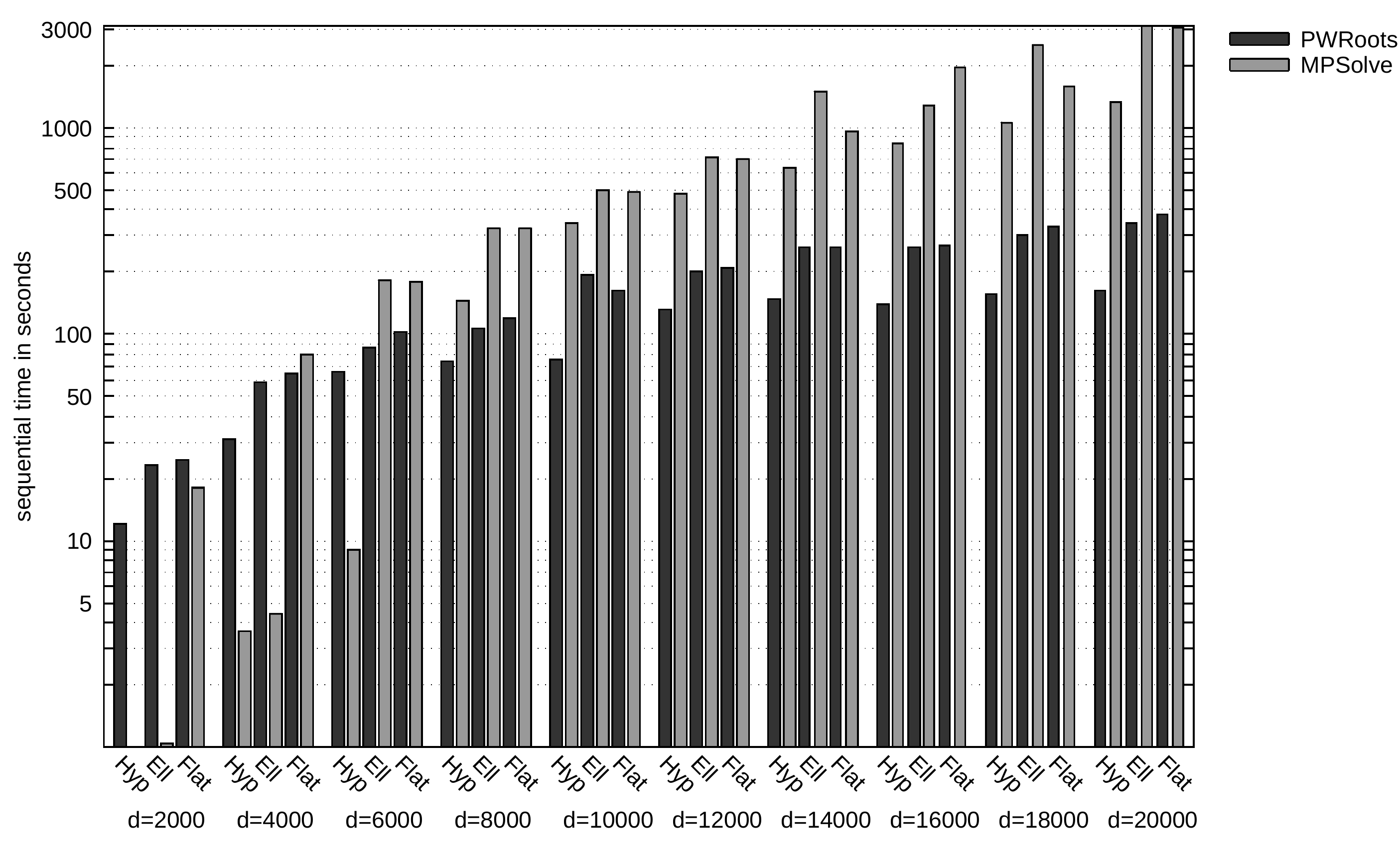}
  \vspace*{-0.5cm}
  \caption{Complex roots isolation of random dense 
  hyperbolic, elliptic and flat
  polynomials of increasing degrees $d$ with
  \pwroots and \mpsolve.}
  \label{figure:bench_isolate}
\end{figure}

\begin{table}[ht!]
\centering
\begin{tiny}
 \begin{tabular}{r||cc||cc||cc||cc||}
 & \multicolumn{2}{c||}{$d=200$}
 & \multicolumn{2}{c||}{$d=400$}
 & \multicolumn{2}{c||}{$d=800$}
 & \multicolumn{2}{c||}{$d=1600$} \\\hline
 & \hspace*{-0.1cm}\pwroots\hspace*{-0.1cm} & \hspace*{-0.1cm}\hcroots\hspace*{-0.1cm}
 & \hspace*{-0.1cm}\pwroots\hspace*{-0.1cm} & \hspace*{-0.1cm}\hcroots\hspace*{-0.1cm}
 & \hspace*{-0.1cm}\pwroots\hspace*{-0.1cm} & \hspace*{-0.1cm}\hcroots\hspace*{-0.1cm}
 & \hspace*{-0.1cm}\pwroots\hspace*{-0.1cm} & \hspace*{-0.1cm}\hcroots\hspace*{-0.1cm} \\\hline
 
 hyperbolic & 1.10 & 0.23 & 2.50 & 0.43 & 6.84 & 0.92 & 13.20 & 1.95 \\\hline
 elliptic   & 1.63 & 10.8 & 3.48 & 70.2 & 11.5 & 611 & 17.8 & $>999$ \\\hline
 flat       & 1.69 & 19.6 & 3.46 & 68.7 & 9.46 & 511  & 22.0 &  $>999$ \\\hline
 
 \end{tabular}
 \caption{
 Running times in seconds for isolating the complex roots
 of random dense 
  hyperbolic, elliptic and flat
  polynomials of increasing degrees $d$ with
  \pwroots and \hcroots.}
\label{table_pwpoly_hcroots}
\end{tiny}
\vspace*{-1cm}
\end{table}

\begin{figure}
  \includegraphics[width=\linewidth]{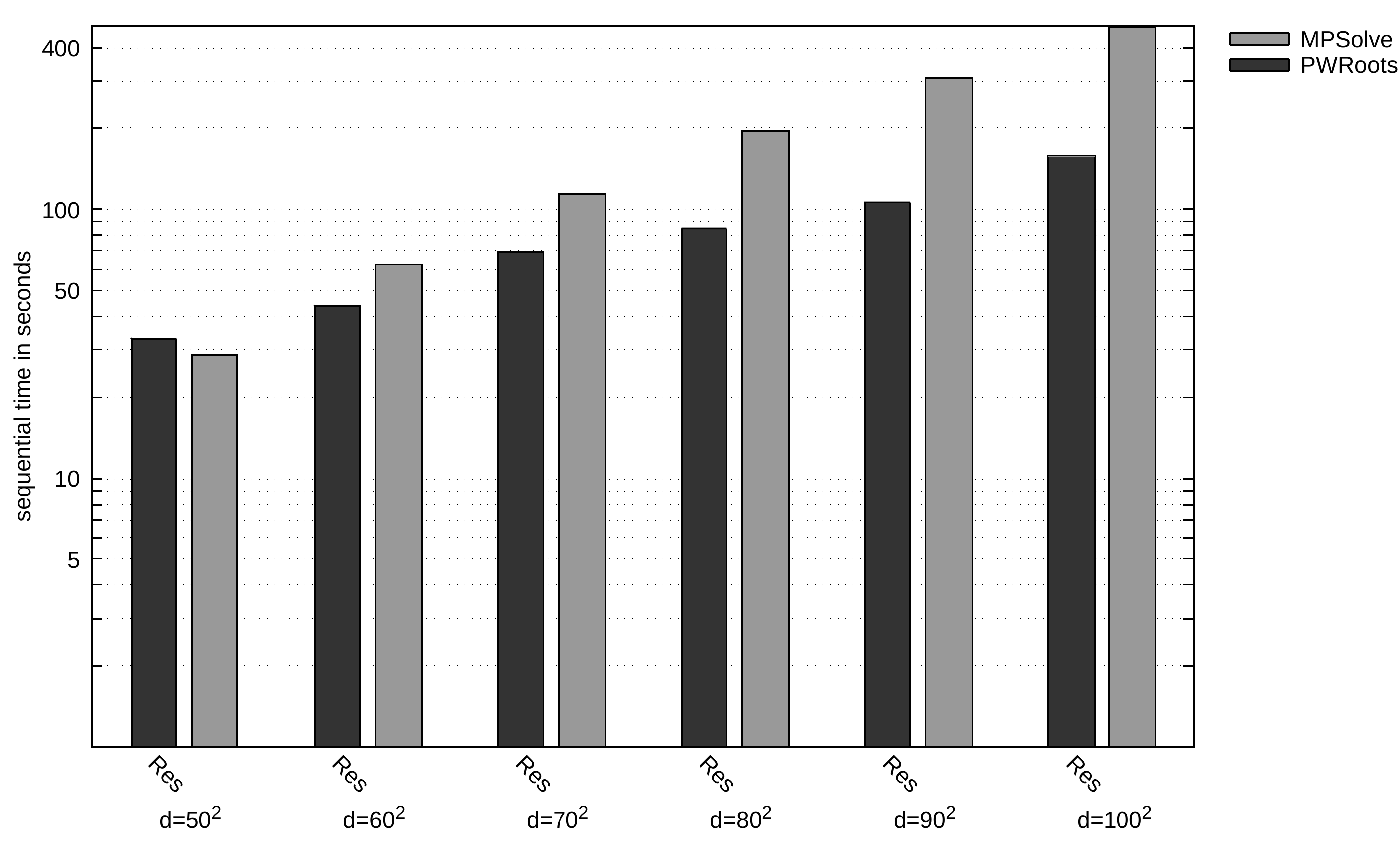}
  \vspace*{-0.5cm}
  \caption{Complex roots isolation of the
  resultant polynomial of two random dense 
  hyperbolic polynomials of increasing degrees $\sqrt{d}$ with
  \pwroots and \mpsolve.}
  \label{figure:bench_isolate_res}
\end{figure}


\begin{figure*}
  \parbox[b]{0.58\linewidth}{
    \hfill\includegraphics[width=0.964\linewidth]{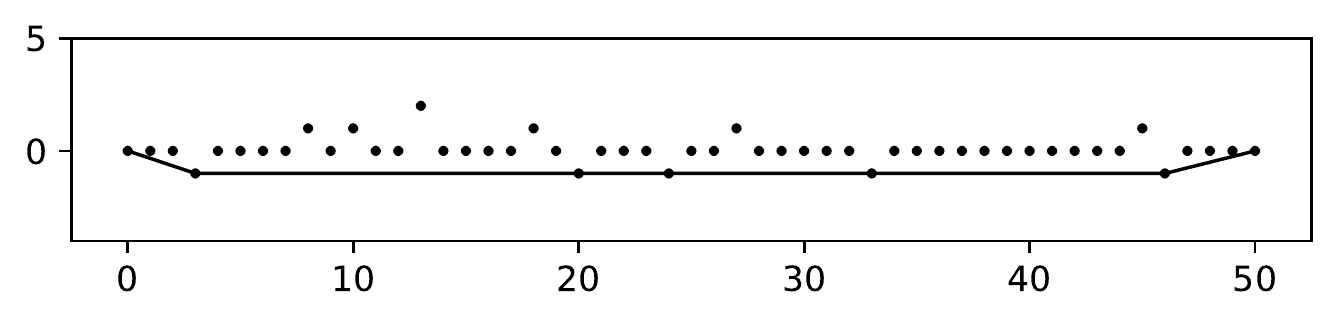}
    \includegraphics[width=\linewidth]{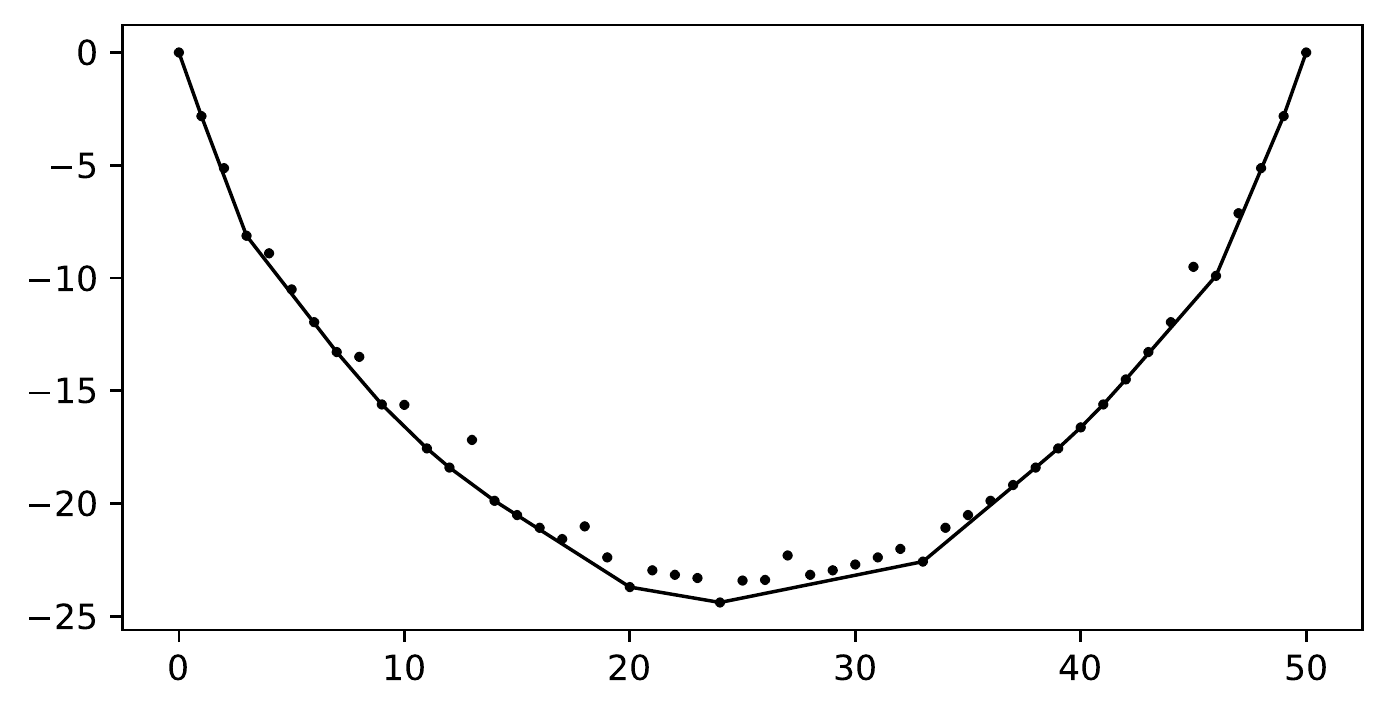}
  }
  \quad
  \includegraphics[height=24em]{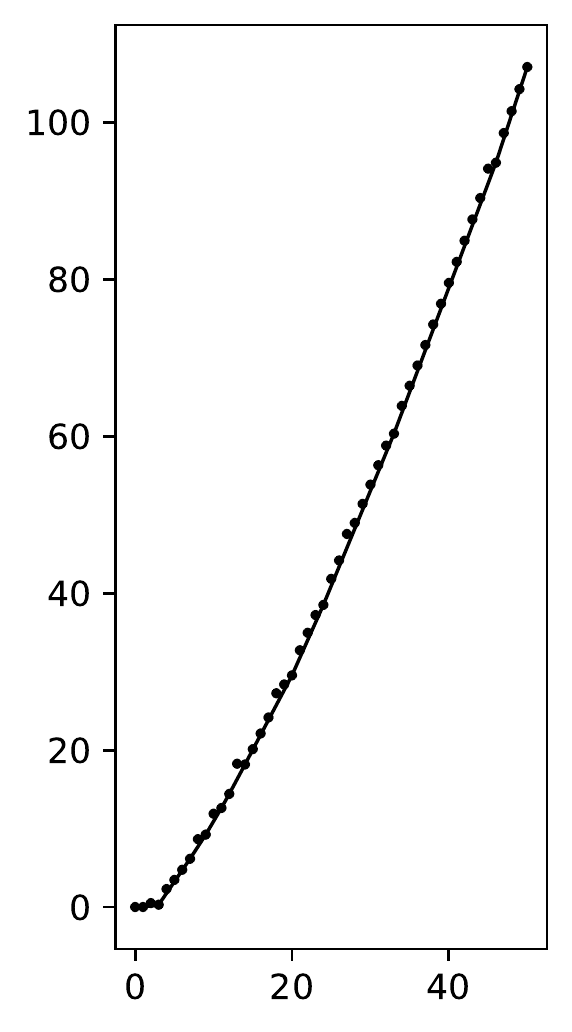}
\caption{Newton polygon of a hyperbolic (top left), elliptic (bottom
left), and flat (right)  polynomial of degree $50$} 
\label{figure:newton_polygons}
\end{figure*}

%
%


\begin{figure*}
  \includegraphics[width=0.3\linewidth]{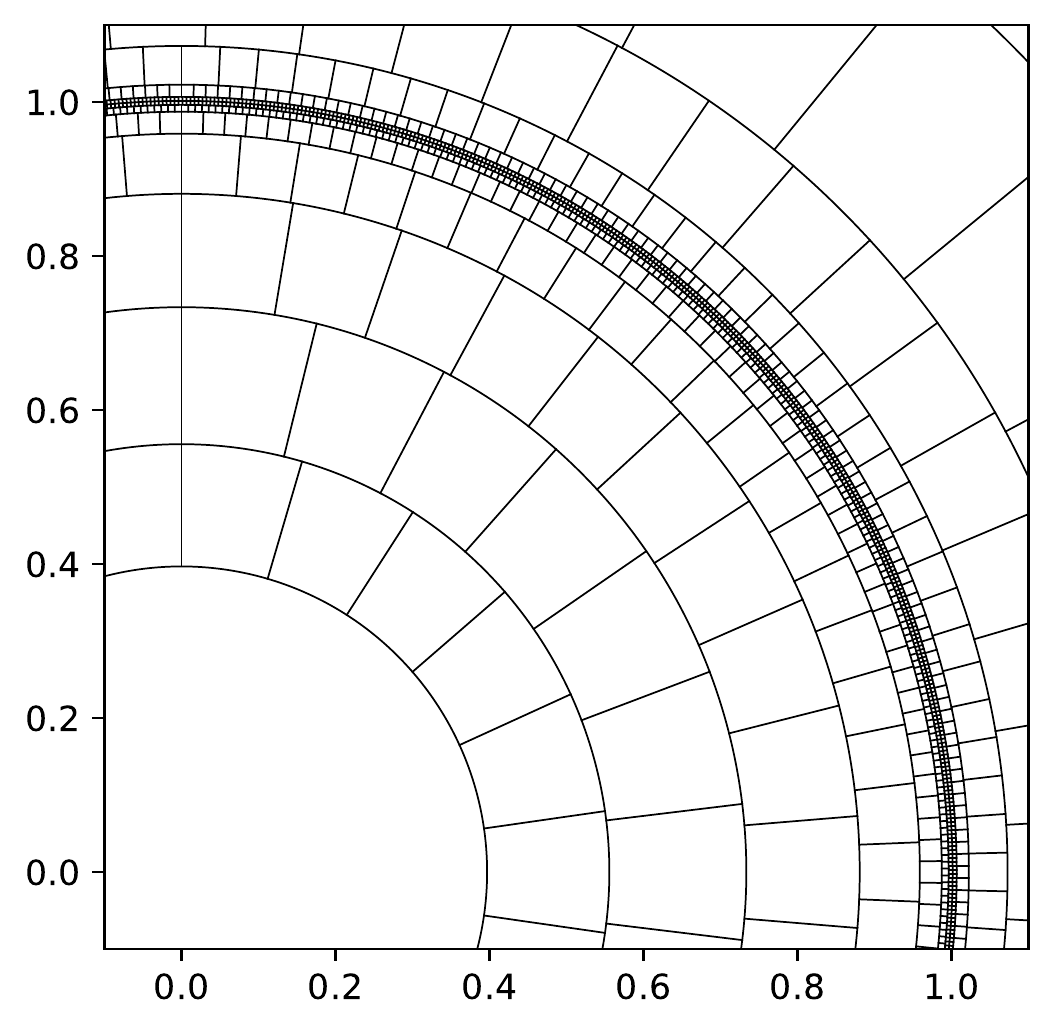}
  \quad
  \includegraphics[width=0.3\linewidth]{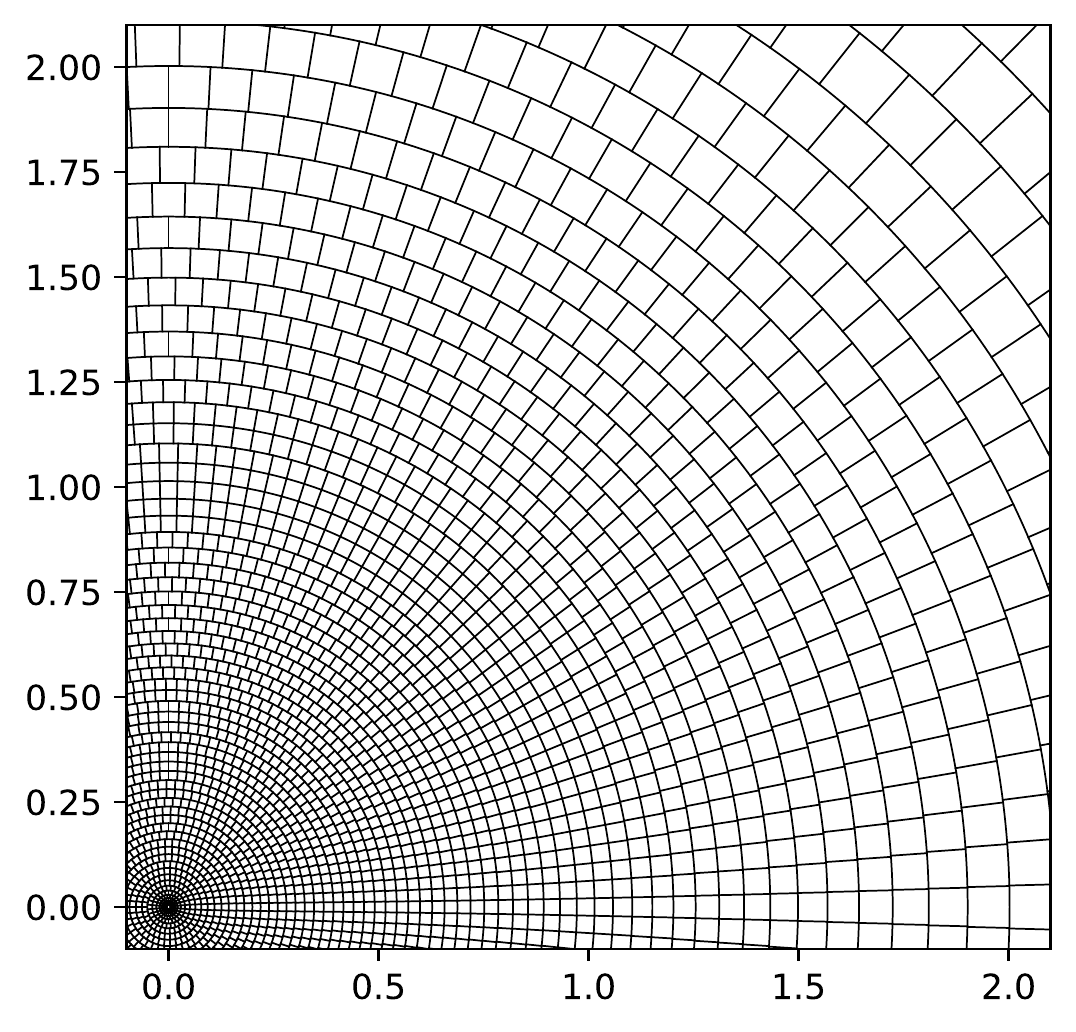}
  \quad
  \includegraphics[width=0.3\linewidth]{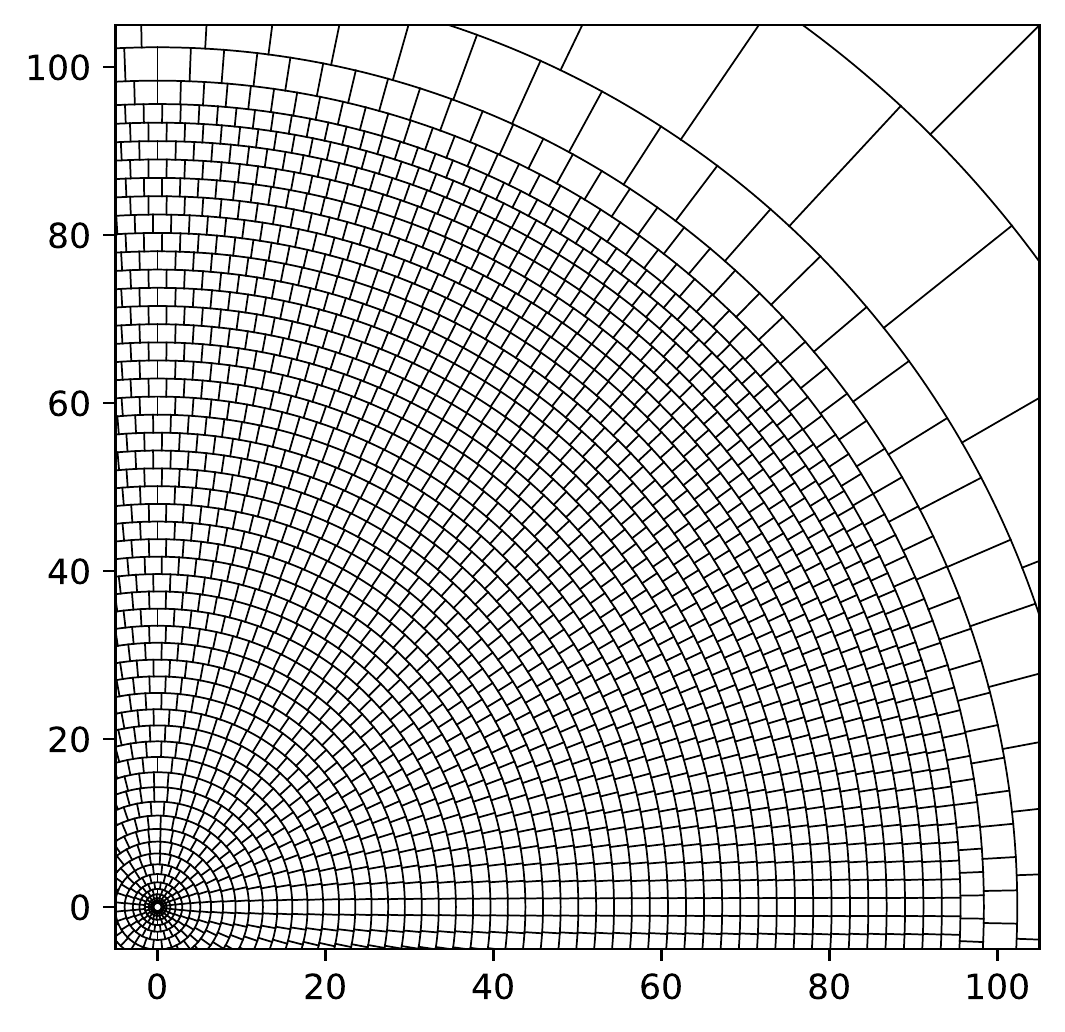}
  \caption{Sector partitions associated to a hyperbolic (left), elliptic (middle) and flat (right) polynomial of degree $10\,000$} 
  \label{figure:sectors_partitions}
\end{figure*}

\end{document}